 \gdef\xxxmark{%
   \expandafter\ifx\csname @mpargs\endcsname\relax 
     \expandafter\ifx\csname @captype\endcsname\relax 
       \marginpar{xxx}
     \else
       xxx 
     \fi
   \else
     xxx 
   \fi}
 \gdef\xxx{\@ifnextchar[\xxx@lab\xxx@nolab}
 \long\gdef\xxx@lab[#1]#2{\textbf{[\xxxmark #2 ---{\sc #1}]}}
 \long\gdef\xxx@nolab#1{\textbf{[\xxxmark #1]}}
 \long\gdef\xxx@lab[#1]#2{}\long\gdef\xxx@nolab#1{}%
 \newtheorem{theorem}{Theorem}[section]
 \newtheorem{lemma}[theorem]{Lemma}
 \newtheorem{corollary}[theorem]{Corollary}
 \newtheorem{definition}[theorem]{Definition}
\newenvironment{proofof}[1]{\begin{proof}[Proof of {#1}]}{\end{proof}}
\newcommand{\E}{\ensuremath{\mathbf{E}}}
\newcommand{\opt}{\textsf{Opt}\xspace}
\newcommand{\Pure}{\textsf{Pure}\xspace}
\newcommand{\Gold}{\textsf{Gold}\xspace}
\renewcommand{\Pr}{\ensuremath{\mathbf{Pr}}}
\newcommand{\NPclass}{\textsf{NP}\xspace}
\newcommand{\eps}{\ensuremath{\varepsilon}}
\newcommand{\Alg}{\ensuremath{\mathsf{Alg}}\xspace}
\newcommand{\greedy}{\ensuremath{\mathsf{Greedy}}\xspace}
\newcommand{\elements}{\ensuremath{\mathcal E}\xspace}
\newcommand{\sets}{\ensuremath{\mathcal S}\xspace}
\newcommand{\cov}{\ensuremath{\mathcal C}\xspace}
\title{Almost Optimal Streaming Algorithms for Coverage Problems}
\author{
  MohammadHossein Bateni\\Google Research
  \and Hossein Esfandiari\\University of Maryland
  \and Vahab Mirrokni\\Google Research
}
\date{}
\begin{document}
\sloppy
\maketitle

\begin{abstract}
Maximum coverage and minimum set cover problems---here collectively
called coverage problems---have been studied extensively in streaming models. 
However, previous research not only achieve suboptimal approximation factors and space complexities, but also
study a restricted set-arrival model which makes an explicit or implicit assumption on oracle
access to the sets, ignoring the complexity of reading and storing the whole set at once. 
In this paper, we address the above shortcomings, and present algorithms with improved approximation factor and improved space complexity, and prove that our results are almost tight. Moreover, unlike most of previous work, our results hold in a more general edge-arrival model. 

More specifically, consider an instance with $n$ sets, together covering $m$ elements.
Information arrives in the form of ``edges'' from sets to elements (denoting membership) in arbitrary order.
\begin{enumerate}
	\item We present (almost) optimal approximation algorithms for maximum coverage and minimum set cover problems in the streaming model with an (almost) optimal space complexity of $\tilde{O}(n)$; i.e., the space is {\em independent of the size of the sets or the size of the ground set of elements}. These results not only improve the best known algorithms for the set-arrival model, but also are the first such algorithms for the more powerful {\em edge-arrival} model.
	\item In order to achieve the above results, we introduce a new general sketching technique for coverage functions: One can apply this sketching scheme to convert an $\alpha$-approximation
	algorithm for a coverage problem to a
	$(1-\eps)\alpha$-approximation algorithm for the same problem in 
	streaming model.
	\item We show the significance of our sketching technique by ruling out the possibility of solving coverage problems via accessing (as a black box) a $(1 \pm \eps)$-approximate oracle (e.g., a sketch function) that estimates the coverage function 
	on any subfamily of the sets. Finally, we show that our streaming algorithms achieve an almost optimal space complexity.
\end{enumerate}

\end{abstract}

\thispagestyle{empty}
\newpage
\setcounter{page}{1}

\section{Introduction}
Maximum coverage and minimum set cover problems---here collectively
called {\em coverage problems}---are among the most fundamental
problems in optimization and computer science.  Coverage problems have
a variety of machine-learning and data-mining applications (for
examples in data summarization and web mining,
see~\cite{CKT10,AMT,pods12,nips13,karbasiKDD2014}).  Solving such
problems has become increasingly important for various real-world
large-scale data-mining applications where due to the sheer amount of
data, either the computation has to be done in a distributed
manner~\cite{CKT10,CKW10,BST12,KMVV13,nips13,IMMM14,MZ15}, or the data
is presented and needs to be analyzed in a
stream~\cite{karbasiKDD2014,assadi2016tight,saha2009maximum,nisan2002communication,emek2014semi,chakrabarti2015incidence}.

These problems have been explored extensively in the literature, but
despite development of several scalable algorithms, the
existing approaches still suffer from a few shortcomings.
First of all, most previously studied models make an explicit or
implicit assumption on having oracle access to each set in its entirety. This
assumption, in particular, ignores the computational complexity of
reading the whole set, or computing the marginal impact of adding a
subset to the solution (i.e., computing union and intersection of
family of subsets). For instance, in the streaming setting, this
assumption is implied in the extensively studied set-arrival
model~\cite{saha2009maximum,emek2014semi,demaine2014streaming,chakrabarti2015incidence}.
Such models are less realistic since all the information of each set need to be gathered together. 
The set-arrival setting directly translates to the vertex-arrival setting in graph streaming%
\footnote{Modeled as a bipartite graph where vertices on one side corresponds to
the sets and vertices on the other side corresponds to elements. 
See Preliminaries for a formal definition.}, 
which is less interesting than the popular edge-arrival setting~\cite{ahn2013spectral,andoni2014towards,assadi2015tight,chitnis2015kernelization,esfandiari2015streaming,kapralov2014approximating,kapralov2015streaming}.
Secondly, current streaming algorithms often achieve
suboptimal approximation guarantees compared to the offline optimum or  do not have the best space
complexities in terms of the number of sets in the input.%
\footnote{We focus on the regime where the number of the element (i.e., the size of the ground set)
is significantly larger than the number of sets, hence the importance of having bounds in terms of
the number of sets rather than elements.\label{foot:nm-regime}}

In this paper, we aim to address the above issues. We develop
streaming algorithms that achieve optimal approximation guarantees as
well as optimal space complexities for coverage problems without any
oracle-access assumptions.  Moreover, our algorithm works in the
(more general) edge-arrival streaming model.
At the core of our analysis lies a simple, yet subtle sketching technique.
In order to demonstrate the power of this technique, we show why
natural sketching approaches do not work well. We also demonstrate
that oracle access to a noisy estimator for the coverage function is
not sufficient. We first present more formal definitions before elaborating 
on these results.

\subsection{Preliminaries}
\paragraph{Coverage Problems}
We study three related {\em coverage} problems. The setting includes
a ground set \elements of $m$ elements, and a family
$\sets\subseteq 2^\elements$ of $n$ subsets of the elements (i.e.,
$n=|\sets|$ and $m=|\elements|$).\footnote{There are two separate series of
  work in this area.  We use the convension of the submodular/welfare
  maximization formulation~\cite{badanidiyuru2012sketching}, whereas the hypergraph-based formulation~\cite{saha2009maximum} typically
  uses $n, m$ in the opposite way.} 
{\em The coverage function} $\cov$ is defined as
$\cov(S) = |\cup_{U\in S} U|$ for any subfamily $S\subseteq \sets$ of
subsets.
In the {\em $k$-cover} problem, given a parameter $k$, the goal is to
find $k$ sets in $\sets$ with the largest union size.  We sometimes
use $\opt_k$ to denote the size of the union for the optimum solution.
In the {\em set cover} problem, the goal is to pick the minimum number of
sets from $\sets$ such that all elements in $\elements$ are covered.
We also study a third problem: In the {\em set cover with $\lambda$
  outliers} problem\footnote{This is sometimes called the $(1-\lambda)$-%
  partial cover problem in the literature.},
the goal is to find the minimum number of sets covering at 
least a $1-\lambda$ fraction of the elements in \elements.

Coverage problems may be modeled as a bipartite graph $G$, 
where \sets corresponds to one part of the vertices, and
\elements corresponds to the other part. 
A vertex representing the set $S \in \sets$
has $\vert S\vert$ edges in $G$, one
to each element $i \in S$.  For simplicity, we assume that there
is no isolated vertex in \elements. 
For a subset $S$ of vertices in a graph $G$, let $\Gamma(G, S)$ denote the set of neighbors
of $S$. When $G$ is the graph corresponding to the original
coverage instance, we have $\cov(S) = |\Gamma(G, S)|$ if $S$ is a
subfamily of the sets $\sets$.

In the offline setting, a simple greedy algorithm achieves
$1-\frac 1 e$ approximation for $k$-cover and $\log m$ approximation
algorithm for the set cover problem.\footnote{Unless otherwise specified, we use the wide-spread
convension for approximation ratios: factors larger than one for minimization problems 
and factors smaller than one for maximization problems.}
Moreover, improving these
approximation factors
are impossible unless $\NPclass$ has  slightly superpolynomial time algorithm~\cite{feige1998threshold}.

\paragraph{Streaming models} 
In the {\em streaming} model, we
focus on the so-called {\em edge-arrival} 
model as opposed to the more studied {\em set-arrival} (aka {\em
  vertex-arrival}) model. In the former, edges arrive one by one, so
we get to know about the set-element membership relations one at a time,
whereas in the latter, sets arrive and bring with them a list of their
elements. The number of passes allowed for processing the data is
crucial and may change the nature of the problem.

\paragraph{The $(1 \pm \eps)$-approximate oracle.} We say $\cov_\eps$ is a
{\em $(1 \pm \eps)$-approximate oracle} to coverage function $\cov$ if, given a subfamily
of sets, it gives us an estimate of their union size within $1\pm\eps$
precision. In other words, $\cov_\eps$ estimates the coverage
function $\cov$ on any subfamily of the sets as a black box; i.e.,
for any subset $S\subseteq \sets$, we have
\begin{align*}
  (1-\epsilon)\cov_{\epsilon}(S) \leq \cov(S) \leq
  (1+\epsilon)\cov_{\epsilon}(S). 
\end{align*}


\subsection{Related work}
Coverage problems have been studied extensively in the context of
set-arrival
models~\cite{assadi2016tight,saha2009maximum,nisan2002communication,emek2014semi,chakrabarti2015incidence}.
Most of these give suboptimal approximation guarantees. In
particular, Saha and Getoor~\cite{saha2009maximum} provide a $\frac 1
4$-approximation algorithm for $k$-cover in one pass using
$\tilde{O}(m)$ space. The same technique gives a $\Theta(\log m)$
approximation algorithm for set cover in $\Theta(\log m)$ passes,
using $\tilde{O}(m)$ space.  On the hardness side, interestingly,
Assadi et al.~\cite{assadi2016tight} show that there is no
$\alpha$-approximation one-pass streaming algorithm for set cover
using $o(nm/\alpha)$ space.
Demaine et al.~\cite{demaine2014streaming} provide (for any positive
integer $r$) a $4^r \log m$-approximation algorithm for the set cover
problem in $4^r$ passes using $\tilde{O}(nm^{1/r}+m)$
space\footnotemark.  Recently, Har-Peled et al.\ improves this result
and provide a $p$-pass $O(p\log m)$-approximation algorithm in
$\tilde{O}(nm^{O(1/p)}+m)$ space\footnotemark[\thefootnote].  Indeed, all the
above results hold only for the set-arrival model.

\footnotetext{
The space bounds claimed in~\cite{demaine2014streaming,har2016towards} assume $m = O(n)$, 
hence stated differently.\label{foot:nm-different}}

Often in the graph streaming problems, while the size of the input is
$\tilde{O}(|E|)$ for a graph $G(V,E)$, the solution size may be
as large as ${\Omega}(|V|)$. The best hope then is to find the
solution in $\tilde{O}(|V|)$ space. 
Algorithms fitting this description are called \emph{semi-streaming}~\cite{muthukrishnan2005data},
and many graph problems have been studied in this setting%
~\cite{ahn2009,AG11,EpsteinLMS09,mcgregor2005,feigenbaum2005graph,KelnerL11,KonradMM12,KonradR13}.
On the other hand, the extensive work on edge-arrival streaming%
~\cite{ahn2013spectral,andoni2014towards,assadi2015tight,chitnis2015kernelization,esfandiari2015streaming,kapralov2014approximating,kapralov2015streaming}
had not (prior to our owrk) studied coverage problems.

\subsection{Results and techniques}


\begin{table*}
\begin{center}
  \begin{tabular}{|c|c|c|c|c|c|} 
    \hline
    Problem  & Credit & \# passes & Approximation & Space & Arrival \\
    \hline
    \hline 
    $k$-cover & \cite{saha2009maximum} & $1$ & $ 1/ 4 $ & $\tilde{O}(m)$ & set \\ 
    $k$-cover & \cite{karbasiKDD2014} & $1$ & $ 1/ 2 $ & $\tilde{O}(n+m)$ & set \\ 
    $k$-cover & Here & 1 & $1- 1/ e -\eps$ & $\tilde{O}(n)$ & edge \\ 
    \hline
    Set cover w. outliers & \cite{emek2014semi,
      chakrabarti2015incidence} & $p$ &
    $O(\min(n^{\frac{1}{p+1}},e^{-\frac 1 p}))$ & $\tilde{O}(m)$ & set \\
		
    Set cover w. outliers & Here & $1$ & $(1+\eps)\log \frac 1 \lambda$ & $\tilde{O}_{\lambda}(n)$ & edge \\
    \hline
    Set cover & \cite{chakrabarti2015incidence,saha2009maximum} & $p$
    & $(p+1)m^{\frac 1 {p+1}}$ & $\tilde{O}(m)$ & set \\
		
    Set cover & \cite{demaine2014streaming} & $4^r$ & $4^r\log m$ &
    $\tilde{O}(nm^{\frac 1 r}+m)$ & set \\		
    Set cover  & \cite{har2016towards} & $p$ & $O(p\log m)$ &
    $\tilde{O}(nm^{O(\frac 1 p)}+m)$ & set \\		
    Set cover & Here & $p$ & $(1+\eps)\log m$ & $\tilde{O}(nm^{O(\frac
      1 p)}+m)$ & edge \\
    \hline
  \end{tabular}
\end{center}		
\caption{Comparison of results in streaming models. Note that all our results for edge arrival model also hold
	for the set arrival model.}
\label{tab:streaming}
\end{table*}

As our main result, we address the aforementioned shortcomings of existing algorithms for
coverage problems.
These
results are summarized in Table~\ref{tab:streaming}.
This paper is {\em the first to study the problem
in the edge-arrival model}, and present tight results for these
problems.

\subsubsection{Streaming results}
  \label{sec:intro:technique}
We present almost tight streaming algorithms for coverage problems. The following theorem states our main results formally.

\begin{theorem}\label{intro:streaming:alg}
  In the edge-arrival streaming model, for any arbitrary $\eps\in (0,1]$, there exist
  \begin{itemize} 
  \item (See Thm~\ref{thm:alg:kcover}) a single-pass
    $(1-\frac 1 e -\eps)$-approximation algorithm for $k$-cover using
    $\tilde{O}(n)$ space;
  \item (See Thm~\ref{thm:alg:epsSetCover}) a single-pass $(1+\eps)\log\frac 1 {\lambda}$-approximation
    algorithm for set cover with $\lambda$ outliers using
    $\tilde{O_{\lambda}}(n)$
    space; and
  \item (See Thm~\ref{thm:alg:SetCover}) a $p$-pass
    $(1+\eps)\log m$-approximation algorithm for set cover using
    $\tilde{O}(nm^{O(\frac{1}{p})}+m)$ space. 
\end{itemize}
\end{theorem}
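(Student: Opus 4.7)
The proof strategy for all three items is unified by a single sketching technique. In one pass of the edge-arrival stream, I would construct a compressed instance $G'$ on the same family $\sets$ such that, with high probability, $\cov_{G'}(S) = (1\pm\eps)\cov_G(S)$ simultaneously for every subfamily $S$ in a carefully chosen collection $\mathcal{F}$ that depends on the problem. Running an appropriate offline $\alpha$-approximation algorithm on $G'$ then yields a $(1-O(\eps))\alpha$-approximation on the original instance. The sketch itself is built by independently subsampling elements via a shared hash function over $\elements$ (so sampling decisions are consistent throughout the stream), combined with a per-set thresholding rule that caps how many sampled edges we actually retain per set.

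For the $k$-cover item, each element is sampled with probability $p\approx \eps^{-2}k\log n/\opt_k$; since $\opt_k$ is unknown, $O(\eps^{-1}\log m)$ parallel copies indexed by geometric guesses of $\opt_k$ are maintained and the best feasible output is returned. The collection $\mathcal{F}$ consists of all $\binom{n}{k}$ subfamilies of size $k$; a Chernoff bound together with a union bound over $\mathcal{F}$ establishes uniform $(1\pm\eps)$ preservation of coverage for the correct guess. Running the offline greedy $(1-1/e)$ algorithm on the sketch then yields the $1-1/e-\eps$ guarantee. For set cover with $\lambda$ outliers, I would run iterative greedy on a similar sketch, selecting at each round the set covering the most residual elements and stopping once a $1-\lambda$ fraction is covered. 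The collection $\mathcal{F}$ now has to support all $O(\log(1/\lambda)\cdot\opt)$ residual queries simultaneously, but is still polynomial in $n$ and $1/\lambda$, so the same Chernoff plus union bound argument goes through with $\tilde{O}_\lambda(n)$ space; classical greedy analysis then gives the $(1+\eps)\log(1/\lambda)$ ratio. For the $p$-pass set cover, one performs $p$ consecutive outlier-style phases with geometrically shrinking residual fractions, using a per-set storage cap of $m^{O(1/p)}$ in each pass to admit the weaker per-pass sampling rate; the $p$ phases telescope from uncovered fraction $1$ down to $1/m$, producing the $(1+\eps)\log m$ ratio in $\tilde{O}(nm^{O(1/p)}+m)$ total space.

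The main obstacle throughout is simultaneously keeping the sketch size at $\tilde{O}(n)$ (or $\tilde{O}(nm^{O(1/p)})$ in the multi-pass case) while preserving coverage uniformly over a collection $\mathcal{F}$ that is combinatorially large. The resolution is the per-set thresholding device: cap the stored sampled edges of each set at roughly $\opt_k$ (or $\opt$), and treat any set exceeding the cap as a standalone candidate solution that is retained directly. Because only $\tilde{O}(n)$ sets can exceed the cap without being individually near-optimal, this preserves sketch fidelity while controlling storage. Instantiating the three problem-specific pieces of the common framework then establishes the three items of Theorem~\ref{intro:streaming:alg}.
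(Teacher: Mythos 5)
Your high-level plan---hash the elements, subsample the edges, run an offline approximation algorithm on the resulting sketch, reduce set cover to the outlier version via geometric guesses of the optimum size, and telescope over passes---matches the paper's architecture. But the mechanism you propose for keeping the sketch at $\tilde{O}(n)$ edges has a genuine gap: you place the degree cap on the \emph{wrong side} of the bipartite graph, and the specific threshold you choose is essentially never triggered.

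Concretely, at sampling rate $p \approx k\log n/(\eps^2\opt_k)$, any single set $S$ covers at most $\opt_k$ elements, so the expected number of its sampled edges is at most $p\cdot\opt_k = k\log n/\eps^2$, which is already below your proposed per-set cap of roughly $\opt_k$ whenever the optimum is even moderately large. The cap therefore does nothing, and the sketch has $\Theta(nk\log n/\eps^2) = \tilde{O}(nk)$ edges, which is what the paper explicitly flags as the barrier after defining $H_p$. Moreover, the auxiliary device of ``retaining a set exceeding the cap as a standalone candidate'' has no obvious meaning for $k$-cover with $k>1$: you still need that set's edge structure to evaluate its marginal gain when combined with other sets in greedy, and if you charge for storing it explicitly you pay $\Omega(\opt_k)$ space, which is not $\tilde{O}(n)$.

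The paper's actual device (in the construction of $H'_p$) caps the degree of each sampled \emph{element}, not each set: every element keeps edges to at most $\frac{n\log(1/\eps)}{\eps k}$ of its sets, and the rest are discarded. The reason this is benign is the probabilistic swap argument in Lemma~\ref{lm:str:H'good}: an element with degree $\ge n\log(1/\eps)/(\eps k)$ appears in so many sets that if you re-randomize an $\eps$-fraction of $\opt_H$ you still hit one of its retained sets with probability $1-\eps$, so a slight perturbation of the optimum covers almost all high-degree elements in $H'_p$ as well. The space bound then falls out of Lemmas~\ref{lm:str:CpN()<} and~\ref{lm:str:N()>} (the number of edges in $H'_p$ is within a factor $\frac{2n\log(1/\eps)}{\eps k}$ of the optimum coverage \emph{inside} $H'_p$, which is bounded), and the dependence on the unknown $\opt_k$ is removed not by geometric guessing but by defining $p^*$ adaptively as the smallest threshold at which $H'_{p^*}$ reaches a target edge budget of $\tilde{O}(n)$ (Definition~\ref{def:mainH}). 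So the sketch has $\tilde{O}(n)$ edges by construction for a single choice of hash, rather than via $O(\eps^{-1}\log m)$ parallel copies. To repair your proposal you would need to replace the per-set cap with the per-element cap and supply (or cite) the swap argument showing the cap only perturbs the optimum by an $O(\eps)$ factor.
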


The above are the first such results for coverage problems in the streaming edge-arrival model. Moreover, they improve the approximation factor of previously
known results for the set-arrival model~\cite{saha2009maximum,nisan2002communication,emek2014semi,chakrabarti2015incidence}.  (However, in certain cases, the space complexities may be incomparable,
say, $\tilde{O}(n)$ versus $\tilde{O}(m)$.\footnote{Indeed, either $m$ or $n$ may be larger in practice~\cite{cormode2003comparing}. See also Footnotes~\ref{foot:nm-regime} and~\ref{foot:nm-different}.}) In fact, our result for
streaming set cover gives an exponential improvement over Demaine et
al.~\cite{demaine2014streaming} on both approximation factor and
number of rounds given the same space.  See Table~\ref{tab:streaming}
for comparison to previous work.
Recently, Har-Peled et al.\ (Theorem~2.6 in~\cite{har2016towards})
provide a $p$-pass $O(p\log m)$-approximation algorithm in
$\tilde{O}(nm^{O(1/p)}+m)$ space in the set-arrival model. Notice that
our results for streaming set cover provide a better approximation
factor---i.e., $(1+\eps)\log m$ versus $O(p\log m)$---in the same space
and number of passes, while handling the more general edge-arrival
model.

On the hardness side, we show that any $\frac{1}{2}+\eps$-approximation
streaming algorithm for $k$-cover requires $\Omega(n)$ space. This
holds even for streaming algorithms with several passes.

\begin{theorem}\label{thm:str:hard}
  Any $\frac{1}{2}+\eps$-approximation multi-pass streaming algorithm for
  $k$-cover requires $\Omega(n)$ space in total.
\end{theorem}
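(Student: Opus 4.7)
The plan is to prove this via a reduction from the two-party Set Disjointness problem. Recall that in Disjointness, Alice holds $x \in \{0,1\}^N$ and Bob holds $y \in \{0,1\}^N$ and they must decide whether $x \wedge y = 0$ (coordinatewise); this requires $\Omega(N)$ bits of communication even for randomized multi-round protocols (Kalyanasundaram--Schnitger, Razborov). I would take $N = \Theta(n)$ and embed the Disjointness instance into a $k$-cover instance on $n$ sets so that a $(\tfrac12+\eps)$-approximation lets Alice and Bob solve Disjointness. A $p$-pass streaming algorithm with space $s$ gives a protocol of total communication $O(ps)$, since after each pass the current memory state is all that needs to be shipped across the cut between Alice's and Bob's portions of the stream. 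The Disjointness lower bound then yields $ps = \Omega(n)$, which is the claimed $\Omega(n)$ total space bound (and in particular $s = \Omega(n)$ for constantly many passes).

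The technical core is to construct a family of hard $k$-cover instances parameterized by $(x,y)$ so that in the YES case (say $x \cap y = \emptyset$) the optimum has value at least $2V$, while in the NO case (say $x=y$, under a standard promise version) the optimum is at most $(1+o(\eps))V$. Given such a factor-$2$ gap, any $(\tfrac12+\eps)$-approximation distinguishes the two cases. The natural plan is to split the $N$ coordinates into $\Theta(N)$ constant-size ``gadgets'' that live on disjoint blocks of the ground set. Each gadget introduces a small number of Alice's sets (defined by the relevant bits of $x$) and Bob's sets (defined by $y$), contributing additively to $\cov$. The gadget is engineered so that, when the corresponding bits of $x$ and $y$ are ``disjoint,'' the best local strategy covers twice the elements as when they ``agree''; $k$ is set proportional to the number of gadgets so that the optimum is essentially forced to pick one set per gadget, and the total coverage becomes $\Theta(N) \cdot V_j$ where $V_j$ is the per-gadget value.

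The main obstacle I expect is controlling cross-gadget coupling: one must ensure that the adversary cannot, in the NO case, concentrate the $k$-cover budget on a small number of gadgets to mimic the YES-case coverage, and conversely that in the YES case the additive contribution across gadgets is actually attainable by a single choice of $k$ sets. I would handle this by carefully balancing set sizes and the ground-set blocks so that the marginal gain decomposes strictly per-gadget, and by choosing $k$ at the threshold where the optimum is forced to touch essentially every gadget. If a direct constant-size gadget construction does not yield a clean factor-$2$ gap, I would amplify a weaker constant-factor gadget by independent repetition and lift the bound via a standard direct-sum / information cost argument, recovering the full $\Omega(n)$ lower bound on total space.
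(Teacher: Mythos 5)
Your high-level plan is the right one --- a reduction from two-party Set Disjointness, using the standard fact that a $p$-pass, $s$-space streaming algorithm yields an $O(ps)$-bit communication protocol when the stream is split between Alice and Bob, combined with the $\Omega(n)$ randomized lower bound of Kalyanasundaram--Schnitger and Razborov. The paper uses exactly this framework. However, there is a real gap in your proposal: you never actually write down a working gadget, and the obstacle you flag (cross-gadget coupling, i.e.\ preventing the budget from concentrating on a few gadgets in the NO case) is genuine for the multi-gadget scheme you sketch, while your two proposed escapes (delicate balancing of block sizes, or a direct-sum/information-cost amplification) are both left unelaborated.

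The paper sidesteps all of this with a far simpler construction that you should be able to recover once you see it: there is no gadget-per-coordinate decomposition at all. Take $\elements = \{a,b\}$ (only two elements), let $\sets$ consist of $n$ sets indexed by $\{1,\dots,n\}$, declare that set $i$ covers $a$ if and only if $i \in A$ and covers $b$ if and only if $i \in B$, stream Alice's incidences first and Bob's second, and set $k=1$. If $A \cap B = \emptyset$ then no single set covers both $a$ and $b$, so $\opt_1 = 1$; if $A \cap B \neq \emptyset$ then some set covers both and $\opt_1 = 2$. A $(\tfrac12+\eps)$-approximation for $1$-cover therefore distinguishes the two cases (the returned set's coverage is an integer in $\{1,2\}$, and in the intersecting case it must be at least $1+2\eps>1$, hence $2$; Alice and Bob can verify the coverage of the returned set with $O(1)$ extra bits since each knows one element's incidences). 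Since there is a single ``gadget'' of constant size and the whole Disjointness universe lives in $\sets$ rather than $\elements$, none of the per-gadget additivity or budget-allocation issues you were worried about ever arise. In short: same reduction, but you had the instance and the universe on the wrong side of the bipartition and were trying to amplify a gap that comes for free.
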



In a simultaneous and independent work, McGregor and Vu~\cite{mcgregor2016better} 
present a single-pass $1-1/e-\epsilon$ approximation algorithm for the $k$-cover
problem in the streaming setting with $\tilde{O}(n)$ space, using a different approach:
They directly analyze the behavior of the greedy algorithm on a specific noisy sketch, while we provide a sketch that translates any $\alpha$-approximation algorithm for $k$-cover to an
$(\alpha-\epsilon)$-approximation streaming algorithm using $\tilde{O}(n)$ space.   

\subsubsection{Sketching technique}
The main technique at the heart of our results
is a powerful sketching to summarize coverage
functions. As its main property, we show that any $\alpha$-approximate
solution to $k$-cover on this sketch is an $(\alpha
-\eps)$-approximate solution to $k$-cover on the original input with
high probability; see Theorem~\ref{thm:str:main}. Interestingly, this
sketch requires only $\tilde{O}(n)$ space. Our sketch is
fairly similar to $\ell_0$
sketches~\cite{cormode2003comparing}, which are
essentially defined to estimate the value of coverage functions; see
Appendix~\ref{Apx:O(nk)} for a formal definition. Indeed, one may
maintain $n$ instances of the $\ell_0$ sketch, and estimate the value
of the coverage function of a single feasible solution of size $k$
with high probability. However, having $n \choose k$ different
choices for a solution of size $k$ leads to a huge blow-up on the
failure probability  of at least one such solution. In
Appendix~\ref{Apx:O(nk)}, we show a straightforward analysis to
approximate $k$-cover using $\ell_0$ sketches with $\tilde{O}(nk)$
space, which is quite larger than our sketch.


All the algorithms presented here construct $\tilde{O}(1)$ independent
instances of the sketch and then solve the problem
without any other direct access to the input. The simplicity of our
sketch enables its efficient construction and fast implementation of
the resulting algorithms.  Interestingly, this technique provides
almost tight approximation guarantees.  We remark that all the
algorithms presented in this work have success probabilities $1-\frac
1 n$; i.e., they may fail to produce the claimed solution with
probability $\frac 1 n$.  For simplicity we do not repeat this
condition elsewhere.

Finally, in an accompanied paper, we also show how to apply this to distributed models, and design scalable distributed algorithms for covering problems. There we also confirm the effectiveness of this algorithm empirically on real data sets~\cite{bateni2016distributed}.\footnote{We decided to remove this part of the paper due to space constraints, and focus on the streaming applications.}

\subsubsection{A $(1 \pm \eps)$-approximate oracle is not sufficient}\label{sec:intro:epsilon}
There are several  sampling or sketching techniques that can be used
to develop a $(1 \pm \eps)$-approximate oracle $\cov_\eps$ to the coverage
function. One might hope that a black-box access to such an
oracle could be used as a subroutine in developing approximation
algorithms with good approximation guarantees. Here, we show that
this is not possible.
\begin{theorem}\label{thm:mainhard}
	Any $\alpha$-approximation algorithm for $k$-cover via oracle
	$\cov_{\eps}$ requires $\exp\left(\Omega({ n
		\eps^2\alpha^2}-{\log n} )\right)$ queries to the oracle. 
\end{theorem}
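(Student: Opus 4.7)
The plan is an indistinguishability argument via Yao's principle combined with an adversarial choice of the oracle. First I would define a distribution $D$ over coverage instances parameterized by a uniformly random ``planted'' $k$-subset $T^* \subseteq [n]$: the sets indexed by $T^*$ are designed to form a nearly perfect $k$-cover (for instance, a random balanced partition of the ground set plus a shared noise component), while the remaining $n - k$ sets are drawn from the same marginal distribution, so no individual set can be distinguished from a planted one by its size or density alone. The parameters $m$, the set densities, and $k$ will be tuned so that $\opt_k$ on the planted instance exceeds $\opt_k$ on the pure background distribution by a factor strictly larger than $1/\alpha$; therefore any $\alpha$-approximate solution must have substantial overlap with $T^*$, and by a counting bound, locating such a solution is essentially as hard as locating $T^*$.

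Next I would specify the adversarial oracle: on each query $S \subseteq [n]$, it returns the ``background'' coverage value that $\cov(S)$ would take if no planting had occurred. The crux of the proof is to show that for any fixed query $S$, with probability $1 - \exp(-\Omega(n\eps^2\alpha^2))$ over the random choice of $T^*$, this background value is within a $(1\pm\eps)$ multiplicative factor of the true $\cov_{T^*}(S)$. Since only the sets of $T^*$ lying inside $S$ perturb the coverage, this reduces to a Chernoff/Hoeffding bound on $|S \cap T^*|$; the relative deviation the algorithm would need to see in order to detect the planting scales like $\eps \cdot (\text{background OPT})/(\text{planted OPT}) = \Theta(\eps \alpha)$, and squaring produces the $\eps^2 \alpha^2$ factor in the exponent.

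I would then close with a standard Le Cam / hypothesis-testing argument. Fix any deterministic algorithm making at most $T$ oracle queries. By a union bound over its at most $T$ queries, the probability over $T^*$ that the adversarial oracle response is inconsistent with some valid $(1\pm\eps)$-oracle for $\cov_{T^*}$ on some queried $S$ is at most $T \cdot \exp(-\Omega(n\eps^2\alpha^2))$. Conditioned on consistency throughout, the algorithm's transcript, and hence its output $k$-subset, is a deterministic object independent of $T^*$; by symmetry, the probability that this fixed output is an $\alpha$-approximate cover for a uniformly random $T^*$ is at most $n^{-\Omega(1)}$, which contributes the additive $\log n$ term in the exponent. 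Randomized algorithms are handled via Yao's minimax principle. Rearranging ``success probability $\geq 2/3$'' yields $T \geq \exp(\Omega(n \eps^2 \alpha^2 - \log n))$.

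The main obstacle I anticipate is tuning the distribution so that three properties hold simultaneously: (i) $\opt_k$ on the planted instance beats the background by a factor exceeding $1/\alpha$; (ii) the marginal distribution of each individual set, and more generally of every small subfamily, is essentially identical under the planted and background laws, so that the adversarial oracle is not betrayed by trivial or low-order queries; and (iii) the deviation $\cov_{T^*}(S) - \E_{T^*}\cov_{T^*}(S)$ is tightly controlled by $|S \cap T^*|$ with the right variance scaling, so that the Hoeffding exponent carries the full $\eps^2 \alpha^2$ factor rather than merely $\eps^2$. It is the interplay between (i) and (iii) that forces the coupling of $\alpha$ and $\eps$ in the final bound, and balancing these competing requirements is likely where the bulk of the technical work resides.
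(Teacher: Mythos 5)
Your proposal follows essentially the same strategy as the paper's proof. The paper formalizes your planted set $T^*$ as a hidden set of ``gold items,'' introduces an intermediate game called $k$-purification (find a set $S$ whose gold count $\Gold(S)$ deviates from $\tfrac{k|S|}{n}$ by more than $\eps\bigl(\tfrac{k|S|}{n}+\tfrac{k^2}{n}\bigr)$), proves it hard via precisely your Yao's-principle--plus--Chernoff--plus--union-bound argument, and then reduces $k$-cover with oracle $\cov_\eps$ to it by setting $\cov(S)=k+\tfrac{n}{k}\Gold(S)$ so that the ``background'' answer $k+|S|$ is a valid $(1\pm 2\eps)$ estimate exactly when $\Gold(S)$ is concentrated---your three anticipated requirements (planted gap exceeding $1/\alpha$, low-order indistinguishability, and the $\eps^2\alpha^2$ variance scaling) are met by that construction.
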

In particular, for any constant $\eps>0$, there is no polynomial-time
$n^{-0.49}$ approximation algorithm for $k$-cover given a
$(1 \pm \eps)$-approximate oracle $\cov_\eps$. This improves upon 
a similar hardness result for submodular functions~\cite{HS15}---and not 
for coverage functions. Our proof technique here
might be of independent interest.  (See details in Appendix~\ref{sec:epsError}.)

In order to prove Theorem~\ref{thm:mainhard}, first we define a
problem called \emph{$k$-purification} for which we show that any
randomized algorithm requires $\delta\exp\big(\Omega(\frac{
	\eps^2k^2}{n} )\big)$ oracle queries to succeed with probability
$\delta$. In a $k$-purification problem instance, we are given a
random permutation of $n$ items, with $k$ gold and
$n-k$ brass items. The types of individual items are not known to us.
We merely have access to an oracle
$\Pure_{\eps}(S)$ for $S\subseteq[1,n]$ defined as
\begin{align*}
\begin{cases}
0 & \text{if $\frac{k|S|}{n} - \eps \left(\frac{k|S|}{n} + \frac{k^2}{n}\right) \leq \Gold(S) \leq \frac{k|S|}{n} + \eps \left(\frac{k|S|}{n} + \frac{k^2}{n}\right) $,} \\
1 & \text{otherwise},
\end{cases}
\end{align*}
where $\Gold(S)$ is the number of gold items in $S$. The goal in this problem is to find a set $S$ such that $\Pure_{\eps}(S)=1$.
The hardness proof is then based on a reduction between
$k$-purification and $k$-cover.

\subsection{Organization}
We next present the core idea behind our sketching technique and then
explain our algorithms in Section~\ref{sec:streaming}.  Due to
space constraints, most proofs and discussions appear in the
appendix.  In particular, we present in the appendix
our negative result for the black-box usage of $(1 \pm \eps)$-approximate oracles.




\section{Sketching for coverage problems}\label{sec:sketch}
\label{sec:sketch}

In this section we present a sketch $H_{\leq n}$ to approximate $k$-cover. Specifically, we show that any $\alpha$-approximate solution to $k$-cover on $H_{\leq n}$ is an $\alpha-O(\eps)$-approximate solution on the input graph, with high probability (see Theorem~\ref{thm:str:main}). Crucially $H_{\leq n}$ uses only $\tilde{O}(n)$ space. In order to define and prove the properties of $H_{\leq n}$, we introduce two intermediary sketches $H_p$ and $H'_p$, where $p \in [0, 1]$ is a parameter to be fixed later on.

In this section, we define the sketch in mathematical terms and establish its desirable properties.
Then in the following section, we discuss the intricacies of building and using it in the streaming model.

Let $h$ be a hash function mapping elements \elements to real numbers in $[0, 1]$.  First we throw away from the bipartite graph $G$ any element whose hash value exceeds $p$. This constructs $H_p$. In Lemma~\ref{lm:str:all} we show that, for sufficiently large $p$, any $\alpha$-approximate solution to $k$-cover on $H_p$ is an $\alpha-O(\eps)$-approximate solution on $G$, with high probability. Unfortunately, the number of edges in $H_p$ may be $\Omega (nk)$. 

Next we enforce an upper bound (defined below in terms of $n, k, \eps$) 
on the degree of elements in $H_p$, by
arbitrarily removing edges as necessary. This constructs $H'_p$.
Again for a sufficiently large choice of $p$, any $\alpha$-approximate
solution to $k$-cover on $H'_p$ is an $\alpha-O(\eps)$-approximate
solution on $G$, with high probability. Interestingly, if we select
$p$ wisely, $H'_p$ requires only $\tilde{O} (n)$ space. However, this
$p$ depends on the value of the optimum solution and may not be
accessible to the algorithm while constructing the sketch. To resolve
this issue, we define $H_{\leq n}$ with a similar structure as $H'_p$,
such that it always has $\tilde{O}(n)$ edges (see Definition~\ref{def:mainH}).
We remark that this conceptual description can be turned into
efficient implementations in several computational frameworks.
%
Next comes the formal definitions of our sketch.

\begin{figure}
	\begin{center}
		\includegraphics[width=14cm]{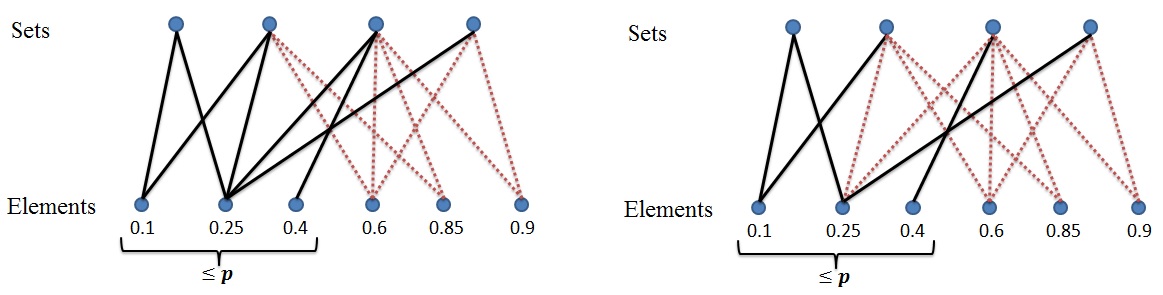}
	\end{center}
	\caption{Left figure is an example of $H_p$ and the right figure is an example of $H'_p$. In both figures, we have $p=0.5$. The number below each vertex is its hashed value. Solid edges are those included in the sketch and dotted edges are the rest of the edges in the graph.}
	\label{fig:H_p&H'_p}
\end{figure}

Let us overload the notation $h(.)$ such that $h(e)$, for an edge $e$,
denotes the value of $h$ on the endpoint of $e$ in \elements. 
For a fixed parameter $p$, we define $H_p$ to be the subgraph of $G$
induced by all vertices in \sets and the vertices in \elements with
$h$ less than $p$. In other words, $H_p$ contains an edge $e$ if and
only if $h(e)\leq p$. 
Let $H'_p$ be a maximal subgraph of $H_p$ such that the degree of the vertices of $H'_p$ in part \elements is at most $\frac{n\log(1/\eps)}{\eps k}$; as necessary we throw away edges arbitrarily. 
Below we define $H_{\leq n}(k,\eps,\delta'')$ based on $H'_p$. The former  is the
sketch used in all our algorithms. 

\begin{definition}\label{def:mainH}
	 For simplicity of
	 notation, we set $\delta = \delta'' \log\log_{1-\eps} m$.
Let $p^*$ be the smallest value such that the number of edges in
$H'_{p^*}$ is at least $\frac{24 n \delta\log(1/\eps) \log
	n}{(1-\eps)\eps^3}$.  Notice that $p^*$ is a function of the
randomness in the hash function. Remark that the number of edges in
$H'_{p^*}$ is at most $\frac{24 n\delta\log(1/\eps)
	\log n}{(1-\eps)\eps^3}+n\in \tilde{O}(n)$. We denote
$H'_{p^*}$ by $H_{\leq n}(k,\eps,\delta'')$, and drop the parameters from $H_{\leq n}(k,\eps,\delta'')$ when it is clear from the context. See Algorithm \ref{Alg:Hn}.

\end{definition}

\begin{algorithm*}
	\textbf{Input:} An input graph $G$, $k$, $\eps\in (0,1]$, and $\delta''$.\\
	\textbf{Output:} Sketch $H_{\leq n}(k,\eps,\delta'')$.
	
	\begin{algorithmic}[1]
		\STATE Set $\delta = \delta''{\log\log_{1-\eps} m}$.
		\STATE Let $h$ be an arbitrary hash function that uniformly and
		independently maps \elements in $G$ to $[0,1]$.
		\STATE Initialize $H_{\leq n}(k,\eps,\delta'')$ with vertices
		\sets of $G$, and no edge.
		\WHILE {number of edges in $H_{\leq n}(k,\eps,\delta'')$ is less
			than $\frac{24 n\delta\log(1/\eps)\log n}{(1-\eps)\eps^3}$}
		\STATE Pick $v \in \elements$ of minimum $h(v)$
		that is still not in $H_{\leq n}(k,\eps,\delta'')$.
		\IF {degree of $v$ in $G$ is less than $\frac{n\log(1/\eps)}{\eps
				k}$} 
		\STATE Add $v$ along with all its edges to $H_{\leq
			n}(k,\eps,\delta'')$.
		\ELSE
		\STATE Add $v$ along with
		$\frac{n\log(1/\eps)}{\eps k}$ of its edges, chosen arbitrary, to
		$H_{\leq n}(k,\eps,\delta'')$. 
		\ENDIF
		\ENDWHILE
		%
	\end{algorithmic}	
	\caption{$H_{\leq n}(k,\eps,\delta'')$}
	\label{Alg:Hn}
\end{algorithm*}



We argue that, for sufficiently large $p$, the quantity $\frac{1}{p}|\Gamma(H_p, S)|$ is a good estimate for
$\cov(S)$.  This is formalized below.

\begin{lemma}\label{lm:str:one}
  Pick $\frac{6\delta'}{\epsilon^2 \opt_k} \leq p \leq 1$, and let $S$
  be an arbitrarily subset of \sets such that $|S|\leq k$. With
  probability $1- e^{-\delta'}$ we have
\begin{align}
\left|\frac 1 p |\Gamma(H_p,S)| - \cov(S)\right| \leq \eps \opt_k.
\end{align}
\end{lemma}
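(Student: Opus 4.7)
The plan is to observe that $|\Gamma(H_p, S)|$ is a binomial random variable, and then apply a Chernoff bound whose parameters are tuned using the hypothesis $p \geq 6\delta'/(\eps^2 \opt_k)$ together with the trivial bound $\cov(S) \leq \opt_k$.

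More precisely, recall from the definition of $H_p$ that an element $v \in \Gamma(G, S) = \cov(S)$ is retained in $H_p$ exactly when $h(v) \leq p$, and the $h(v)$ are independent uniform draws from $[0,1]$. Hence
\[
|\Gamma(H_p, S)| = \sum_{v \in \Gamma(G,S)} \mathbf{1}[h(v) \leq p]
\]
is a sum of $\cov(S)$ independent Bernoullis of parameter $p$, so its expectation equals $p\cdot \cov(S)$. The claim is a concentration statement around this expectation, scaled by $1/p$: we want
\[
\bigl|\,|\Gamma(H_p,S)| - p\cdot\cov(S)\,\bigr| \leq \eps\, p\,\opt_k.
\]

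To obtain this I would apply the multiplicative Chernoff bound $\Pr[|X - \mu| > t\mu] \leq 2 e^{-\mu t^2/3}$ with $\mu = p\,\cov(S)$ and $t\mu = \eps p\, \opt_k$, i.e.\ $t = \eps\,\opt_k/\cov(S) \geq \eps$ (since $|S|\leq k$ forces $\cov(S) \leq \opt_k$). The resulting failure probability is
\[
2 \exp\!\left(-\frac{\mu t^2}{3}\right) = 2 \exp\!\left(-\frac{p\,\eps^2\, \opt_k^2}{3\,\cov(S)}\right) \leq 2\exp\!\left(-\frac{p\,\eps^2\,\opt_k}{3}\right),
\]
where the last inequality again uses $\cov(S) \leq \opt_k$. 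Plugging in the hypothesis $p \geq 6\delta'/(\eps^2\,\opt_k)$ bounds the exponent by $-2\delta'$, yielding failure probability at most $2e^{-2\delta'} \leq e^{-\delta'}$ (the factor $2$ is absorbed for $\delta'$ a sufficiently large constant; otherwise one can tighten by a constant factor in the hypothesis). Dividing both sides of the event bound by $p$ then gives exactly the statement of the lemma.

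There is no real obstacle here; the argument is a textbook application of Chernoff. The only small care needed is the direction of the inequality $\cov(S) \leq \opt_k$, which is used twice (first to lower bound $t$ so that the multiplicative Chernoff regime applies for any $\eps \in (0,1]$, and second to simplify the exponent). The degenerate case $\cov(S)=0$ is handled trivially, since then $|\Gamma(H_p,S)|=0$ deterministically and the inequality holds with zero error.
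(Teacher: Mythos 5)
Your proof is correct and follows essentially the same path as the paper's: recognize $|\Gamma(H_p,S)|$ as a sum of $\cov(S)$ independent Bernoulli$(p)$ variables, apply a multiplicative Chernoff bound with relative deviation $t=\eps\,\opt_k/\cov(S)$, and use $\cov(S)\leq\opt_k$ and the hypothesis on $p$ to bound the exponent by $-2\delta'$, absorbing the leading factor $2$ assuming $\delta'\geq 1$ (the paper does this via $2<e$). The one cosmetic remark is that the bound $2e^{-\mu t^2/3}$ is the form valid for $t\leq 1$; when $\cov(S)<\eps\,\opt_k$ one has $t>1$ and the correct upper-tail exponent is $-\mu t/3=-\eps p\,\opt_k/3$, which is still $\leq -2\delta'$ since $\eps\leq 1$, so the conclusion is unaffected — the paper carries the same slight imprecision.
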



In the following lemma we relate the approximate solutions on $H_p$ and $G$.

\begin{lemma}\label{lm:str:all}
  Pick $\frac{6 k\delta\log n}{\eps^2 \opt_k} \leq p \leq 1$. All
  $\alpha$-approximate solutions on $H_p$ are
  $(\alpha-2\eps)$-approximate solutions to the $k$-cover problem on
  $G$ with probability $1-e^{-\delta}$.
  Simultaneously for any set $S\subseteq \sets$ such that $|S|=k$, we have $\Big|\frac 1 p |\Gamma(H_p,S)| - \cov(S)\Big| \leq \eps \opt_k$.
\end{lemma}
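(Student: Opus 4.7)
The plan is to upgrade the pointwise estimate of Lemma~\ref{lm:str:one} to a uniform statement over all $k$-element subfamilies by a union bound, and then deduce the approximation guarantee as a direct consequence.

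Concretely, I would invoke Lemma~\ref{lm:str:one} with the parameter $\delta' := k\delta\log n$, so that the hypothesis $p \geq \tfrac{6k\delta\log n}{\eps^2\,\opt_k}$ of the present lemma coincides exactly with the hypothesis $p \geq \tfrac{6\delta'}{\eps^2\,\opt_k}$ of Lemma~\ref{lm:str:one}. For each fixed $S \subseteq \sets$ with $|S| = k$, the event $\bigl|\tfrac{1}{p}|\Gamma(H_p,S)| - \cov(S)\bigr| \leq \eps\,\opt_k$ then holds with probability at least $1 - e^{-\delta'} = 1 - n^{-k\delta}$. Taking a union bound over the at most $\binom{n}{k} \leq n^k$ such subfamilies, the probability that the estimate fails for \emph{some} size-$k$ subfamily $S$ is at most $n^k \cdot n^{-k\delta} = n^{-k(\delta-1)} \leq e^{-\delta}$ for any non-trivial $\delta$. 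This directly gives the ``simultaneously'' clause of the lemma.

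Now I condition on that good event. Let $S^\star$ be an optimal $k$-cover of $G$, so $\cov(S^\star) = \opt_k$. Applying the uniform estimate to $S^\star$ gives $|\Gamma(H_p,S^\star)| \geq p(1-\eps)\,\opt_k$, and hence the optimum $k$-cover value on $H_p$ is at least $p(1-\eps)\,\opt_k$. If $S$ is any $\alpha$-approximate solution on $H_p$, then $|\Gamma(H_p,S)| \geq \alpha p(1-\eps)\,\opt_k$, so applying the uniform estimate once more to $S$ itself yields
\begin{equation*}
\cov(S) \;\geq\; \tfrac{1}{p}|\Gamma(H_p,S)| - \eps\,\opt_k \;\geq\; \alpha(1-\eps)\,\opt_k - \eps\,\opt_k \;\geq\; (\alpha - 2\eps)\,\opt_k,
\end{equation*}
using $\alpha \leq 1$ in the final step. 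This is exactly the approximation claim.

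The only real obstacle is the bookkeeping for the union bound: the extra factor of $k$ in the hypothesis of the present lemma (relative to that of Lemma~\ref{lm:str:one}) is precisely the price paid for controlling all $\binom{n}{k}$ subfamilies simultaneously, and matching constants requires nothing more than the choice $\delta' = k\delta\log n$. A minor subtlety is the mismatch between ``$|S|\le k$'' in Lemma~\ref{lm:str:one} and ``$|S|=k$'' here, but it is harmless: only size-$k$ subfamilies participate in $k$-cover, so the union bound naturally runs over precisely those sets.
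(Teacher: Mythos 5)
Your proof is correct and follows essentially the same route as the paper's: invoke Lemma~\ref{lm:str:one} with $\delta' = k\delta\log n$, union-bound over the $\binom{n}{k}$ size-$k$ subfamilies to get the simultaneous estimate, then apply it to both an optimal $k$-cover of $G$ and an $\alpha$-approximate solution on $H_p$ and combine, using $\alpha\le 1$ at the end. The only cosmetic difference is that you lower-bound the $H_p$-optimum via $|\Gamma(H_p,S^\star)|\ge p(1-\eps)\opt_k$ whereas the paper manipulates the three inequalities directly, but the content is identical.
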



%
The following lemma relates the solutions on $H'_p$ and $H_p$.

\begin{lemma}\label{lm:str:H'good}
  Pick arbitrary $0 \leq p\leq 1$ and $1\leq k\leq n$. Any
  $\alpha$-approximate solution of $k$-cover on $H'_p$ is an
  $\alpha(1-\eps)$-approximate solution on $H_p$. 
\end{lemma}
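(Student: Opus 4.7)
The plan is to reduce the lemma to the cleaner inequality $\mathrm{OPT}(H'_p)\geq (1-\eps)\,\mathrm{OPT}(H_p)$ between the two optima. Since $H'_p$ is a subgraph of $H_p$, we have $\Gamma(H'_p,\hat S)\subseteq \Gamma(H_p,\hat S)$ for every $\hat S\subseteq \sets$, so any $\alpha$-approximate $\hat S$ on $H'_p$ satisfies
\[
|\Gamma(H_p,\hat S)|\;\geq\;|\Gamma(H'_p,\hat S)|\;\geq\;\alpha\cdot\mathrm{OPT}(H'_p)\;\geq\;\alpha(1-\eps)\,\mathrm{OPT}(H_p),
\]
which is exactly the statement of the lemma. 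Thus the task reduces to exhibiting a size-$k$ candidate on $H'_p$ whose coverage is at least $(1-\eps)\,\mathrm{OPT}(H_p)$.

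The natural candidate is an optimum $S^*$ on $H_p$: then $\mathrm{OPT}(H'_p)\geq |\Gamma(H'_p,S^*)|=\mathrm{OPT}(H_p)-L$, where $L:=|\Gamma(H_p,S^*)\setminus\Gamma(H'_p,S^*)|$ is the number of ``lost'' elements, and it suffices to show $L\leq \eps\,\mathrm{OPT}(H_p)$. A lost vertex $v$ must satisfy two conditions: (i) $\deg_{H_p}(v)>D:=n\log(1/\eps)/(\eps k)$, because otherwise every edge of $v$ is preserved in $H'_p$; and (ii) every edge from $v$ into $S^*$ is among the edges removed during the capping. Combining (i)--(ii) forces the $D$ retained edges of $v$ to lie entirely in $\sets\setminus S^*$. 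Now double-count the $H_p$-edges from lost vertices into $\sets\setminus S^*$: summing over vertices produces a lower bound of $L\cdot D$, while summing over sets yields an upper bound of $\sum_{U\notin S^*}\deg_{H_p}(U)\leq |E(H_p)|$. Using that any single set is a feasible $1$-cover one has $|U|\leq \mathrm{OPT}_k$ and therefore $|E(G)|\leq n\,\mathrm{OPT}_k$; the per-element $p$-subsampling underlying $H_p$ then gives $|E(H_p)|\lesssim p\cdot n\,\mathrm{OPT}_k$. Invoking the $p$-rescaling from Lemma~\ref{lm:str:one} to replace $p\,\mathrm{OPT}_k$ by $\mathrm{OPT}(H_p)$ and substituting $D=n\log(1/\eps)/(\eps k)$ makes the $n/(\eps k)$ factor hidden inside $D$ cancel the extraneous $n$ on the right-hand side, delivering $L\leq \eps\,\mathrm{OPT}(H_p)$.

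The main obstacle is making this counting tight: the crude estimate $|E(H_p)|\leq p n\,\mathrm{OPT}_k$ leaks a factor of order $k/\log(1/\eps)$ in the regime $k\gtrsim \log(1/\eps)$. I expect the cleanest way to close this gap is to decompose the elements of $G$ into those of $G$-degree at most $D$ (contributing at most $D\cdot|\elements\cap H_p|$ edges cleanly, and which furthermore cannot be lost) and those of degree exceeding $D$ (of which there are at most $|E(G)|/D$ by a Markov-style pigeonhole), and then re-run the double count separately over the two parts. Because only the high-degree part can contribute to $L$ and this part is already sparse, the factor $D=n\log(1/\eps)/(\eps k)$ absorbs the slack precisely, and the resulting bound $L\leq \eps\,\mathrm{OPT}(H_p)$ holds uniformly in $p$ and $k$.
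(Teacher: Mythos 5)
Your outer reduction is sound and matches the paper: it suffices to exhibit one size-$k$ family $R$ with $|\Gamma(H'_p,R)| \geq (1-\eps)\,\mathrm{OPT}(H_p)$, after which the chain $|\Gamma(H_p,\hat S)| \geq |\Gamma(H'_p,\hat S)| \geq \alpha\,\mathrm{OPT}(H'_p) \geq \alpha(1-\eps)\,\mathrm{OPT}(H_p)$ finishes the lemma. But the witness you choose is wrong, and the double-counting you build around it cannot close the gap. You take $R = S^*$, the optimum on $H_p$, and try to show it loses at most an $\eps$ fraction of $\mathrm{OPT}(H_p)$ under the degree capping. That is simply false: the capping is adversarial, and nothing stops it from deleting, for every high-degree element in $\Gamma(H_p,S^*)$, precisely the edges into $S^*$. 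Your own accounting exposes this -- you bound $L$ by $|E(H_p)|/D \approx p\,n\,\mathrm{OPT}_k / D = \eps k\,p\,\mathrm{OPT}_k / \log(1/\eps)$, which is a factor $\Theta(k/\log(1/\eps))$ too large. The decomposition fix you then sketch does not repair this: the count of high-degree elements in $\Gamma(H_p,S^*)$ can still be as large as $\Theta(p\,k\,\mathrm{OPT}_k)$ (each such element may contribute only one edge to $S^*$), so the same factor $k$ survives the split.

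There is also a second, independent flaw: the lemma is a \emph{deterministic} statement that must hold for every $0 \leq p \leq 1$ and every realization of the hash function (no ``with probability'' clause appears). Your argument invokes the probabilistic estimate $|E(H_p)| \lesssim p\,n\,\mathrm{OPT}_k$ and the $p$-rescaling of Lemma~\ref{lm:str:one}, both of which are concentration statements about the random sampling. Those are not available here.

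The missing idea, and the heart of the paper's proof, is to \emph{not} keep $S^*$ fixed. The paper builds a random $R^*$ by deleting $\eps k$ sets of $S^*$ uniformly at random and adding $\eps k$ fresh sets drawn uniformly from $\sets$. Then each element of $\Gamma(H_p,S^*)$ is retained in $\Gamma(H'_p,R^*)$ with probability at least $1-\eps$, for a different reason depending on its degree: (a) if its $H_p$-degree is below the cap $D$, none of its edges are removed, and it stays covered unless all its originating sets in $S^*$ were among the $\eps k$ randomly deleted, which has probability $\leq \eps$; (b) if its $H_p$-degree exceeds $D$, it has $D = n\log(1/\eps)/(\eps k)$ retained edges in $H'_p$, so the $\eps k$ randomly added sets miss all of them with probability at most $(1 - D/n)^{\eps k} \leq \eps$. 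Taking expectations gives $\E[|\Gamma(H'_p,R^*)|] \geq (1-\eps)\,\mathrm{OPT}(H_p)$, hence a deterministic witness $R$ exists. This is purely combinatorial, makes no appeal to the random hash, and is exactly where the otherwise-unmotivated degree threshold $D$ is calibrated; a proof that never moves off $S^*$ cannot exploit it.
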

\begin{proof}
  Let $\opt_{H}$ and $\opt_{H'}$ be subsets of \sets with size $k$
  that maximize $|\Gamma(H_p,\opt_{H})|$ and
  $|\Gamma(H'_p,\opt_{H'})|$, respectively. Remark that $H'_p$ is a
  subgraph of $H_p$, hence $|\Gamma(H'_p,S)|\leq
  |\Gamma(H_p,S)|$ for any $S\subseteq \sets$. Later we show that
  there exists a set $R$ of size $k$ such that $|\Gamma(H'_p,R)| \geq
  (1-\eps)|\Gamma(H_p,\opt_{H})|$. Thus, for an $\alpha$-approximate
  solution $S$ on $H'_p$, we have
\begin{align*}
|\Gamma(H_p,S)|
&\geq  |\Gamma(H'_p,S)| 
&\text{since $H'_p \subseteq H_p$,}
\\
&\geq \alpha |\Gamma(H'_p,\opt_{H'})|
&\text{as $S$ is an $\alpha$-approximate solution,}
\\
&\geq \alpha |\Gamma(H'_p,R)| 
&\text{by definition of $\opt_{H'}$,}
\\
&\geq \alpha (1-\eps)|\Gamma(H_p,\opt_{H})|.
\end{align*}

To prove the existence of a suitable $R$, we follow a probabilistic
argument, producing a randomized set $R^*$ of size $k$ such that 
$\E[|\Gamma(H'_p,R^*)|] \geq (1-\eps)|\Gamma(H_p,\opt_{H})|$.

We construct $R^*$ by removing $\eps k$ sets from $\opt_H$ uniformly
at random, and adding $\eps k$ sets from \sets uniformly at
random. Note that each element in $\Gamma(H_p,\opt_{H})$  with degree
at most $\frac{n\log(1/\eps)}{\eps k}$ in $H_p$ appears 
in $\Gamma(H_p,R^*)$ with probability $1-\eps$, hence in
$\Gamma(H'_p,R^*)$. Now let us consider a high-degree element
$u$---one with degree at least $\frac{n\log(1/\eps)}{\eps k}$ in
$H_p$, i.e., degree exactly $\frac{n\log(1/\eps)}{\eps k}$ in
$H'_p$. The probability that $u$ is not included in any of the $\eps
k$ randomly added sets is at most
\begin{align*}
\left(1-\frac{\frac{n\log(1/\eps)}{\eps k}}{n}\right)^{\eps k}
=\left(1-\frac{\log(1/\eps)}{\eps k}\right)^{\eps k}
=\left(1-\frac{\log(1/\eps)}{\eps k}\right)^{\frac{\eps k}{\log(1/\eps)}{\log \frac 1 {\eps}}}
\leq \left(\frac 1 e\right)^{\log\frac{1}{\eps}}
= \eps.
\end{align*}
Therefore, each vertex in $\Gamma(H_p,\opt_{H})$ exists in
$\Gamma(H'_p,R^*)$ with probability at least $1-\eps$, proving the
claim $\E[|\Gamma(H'_p,R^*)|] \geq (1-\eps)|\Gamma(H_p,\opt_{H})|$.
\end{proof}


In the following two lemmas, we argue that maintaining the solution in the
reduced-degree subgraph $H'_p$ does not require too much memory.

\begin{lemma} \label{lm:str:CpN()<} Pick arbitrary $C\geq 1$ and let
  $p= \frac{6 Ck\delta\log n}{\eps^2 \opt_k}$. With probability at
  least $1-e^{1-\delta}$, we have
\begin{align*}
 \max_{S\subseteq \sets:|S|=k}|\Gamma(H'_p,S)|\leq \frac{12Ck\delta\log n}{\eps^2}.
\end{align*}
\end{lemma}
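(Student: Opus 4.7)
The plan is to reduce everything to a Chernoff bound on $|\Gamma(H_p,S)|$ for a single fixed $S$, and then take a union bound over all $k$-subsets. Since $H'_p$ is a subgraph of $H_p$, we immediately have $|\Gamma(H'_p,S)| \leq |\Gamma(H_p,S)|$ for every $S$, so it suffices to prove the stated bound for $H_p$ instead of $H'_p$.

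Fix a subset $S \subseteq \sets$ with $|S|=k$, and let $N = \cov(S) \leq \opt_k$ be the number of elements in $\Gamma(G,S)$. The hash function $h$ keeps each such element independently with probability $p$, so $|\Gamma(H_p,S)|$ is a sum of $N$ independent Bernoulli($p$) random variables with mean $\mu = pN \leq p\,\opt_k$. The target threshold is $2p\,\opt_k \geq 2\mu$. By the standard multiplicative Chernoff bound (or directly from $\E[e^{tX}]\leq e^{Np(e^t-1)}$ at $t=\ln 2$), one obtains
\[
\Pr\!\left[|\Gamma(H_p,S)| \geq 2p\,\opt_k\right] \leq \exp\!\left(-c\, p\,\opt_k\right)
\]
for an absolute constant $c>0$ (e.g.\ $c = 2\ln 2 - 1 > 1/3$). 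Substituting the hypothesis $p\,\opt_k = 6Ck\delta\log n/\eps^2$, this failure probability is at most $\exp\!\left(-2Ck\delta\log n/\eps^2\right)$.

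Next I take a union bound over all $\binom{n}{k}\leq n^k = e^{k\log n}$ subsets of size $k$:
\[
\Pr\!\left[\max_{|S|=k}|\Gamma(H_p,S)| \geq 2p\,\opt_k\right]
\leq \exp\!\left(k\log n - \frac{2Ck\delta\log n}{\eps^2}\right).
\]
Since $C\geq 1$ and $\eps \leq 1$, the exponent is at most $k\log n\bigl(1 - 2\delta\bigr) \leq 1-\delta$ whenever $k\log n \geq 1$ and $\delta \geq 1$ (the edge cases $k\log n<1$ or small $\delta$ are trivial as the claimed bound $e^{1-\delta}$ becomes vacuous or easily dominated). This yields the desired probability $\leq e^{1-\delta}$, and combined with $|\Gamma(H'_p,S)|\leq |\Gamma(H_p,S)|\leq 2p\,\opt_k = 12Ck\delta\log n/\eps^2$ completes the proof.

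The only mildly delicate point is the constant-fitting at the end: we need the Chernoff exponent $c\cdot 6C\delta/\eps^2$ to dominate the union-bound factor $k\log n$ and still leave room to absorb $\delta$. The choice of target $12Ck\delta\log n/\eps^2 = 2p\,\opt_k$ matches the Chernoff regime where the deviation is a constant factor above the mean, so $c$ can be taken as an absolute constant independent of all parameters; no tighter concentration inequality is needed.
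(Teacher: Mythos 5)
Your proof is correct, but it takes a different and longer route than the paper's. The paper derives this lemma as a two-line corollary of Lemma~\ref{lm:str:all}: applying that lemma to $S = \arg\max_{|S|=k}|\Gamma(H_p,S)|$ gives $\frac 1 p \max_S|\Gamma(H_p,S)| - \cov(S)\le \eps\opt_k$, and combining with $\cov(S)\le \opt_k$, $H'_p\subseteq H_p$, and the definition of $p$ yields $\max_S|\Gamma(H'_p,S)| \le p(1+\eps)\opt_k \le 12Ck\delta\log n/\eps^2$. You instead re-derive the concentration from scratch with a fresh one-sided Chernoff bound (MGF at $t=\ln 2$, giving $\Pr[X\ge 2p\opt_k]\le e^{-(2\ln2-1)p\opt_k}$) followed by a union bound over $\binom nk$ subsets. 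Both are sound; your version is more self-contained and uses a tighter one-sided deviation bound, whereas the paper's is shorter because the union-bounded Chernoff machinery is already packaged in Lemmas~\ref{lm:str:one}--\ref{lm:str:all}. One structural difference worth noting: in the paper's proof the conclusion of Lemma~\ref{lm:str:CpN()<} holds on the \emph{same} event as Lemma~\ref{lm:str:all}, whereas yours introduces a separate bad event of probability $e^{1-\delta}$. In the proof of Theorem~\ref{thm:str:main} the authors union-bound over these events anyway, so this does not cause a problem downstream, but it is a difference in how the pieces fit. A small slip: you ask for $k\log n\ge 1$ when closing the exponent inequality $k\log n(1-2\delta)\le 1-\delta$; in fact $k\log n \ge \tfrac{\delta-1}{2\delta-1}$ suffices and the right-hand side is always below $1/2 < \log 2$, so no edge case actually arises for $n\ge 2$.
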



\begin{lemma}\label{lm:str:N()>}
  Pick arbitrary $0\leq p\leq 1$ and $1\leq k\leq n$, and let $m'_p$
  denote the number of edges in $H'_p$. We have
\begin{align*}
m'_p \frac{\eps k}{2n\log(1/\eps)} \leq |\Gamma(H'_p,\opt_{H'})|.\end{align*}
\end{lemma}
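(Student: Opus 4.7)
The natural strategy is a first-moment argument: draw $S$ uniformly at random from the $\binom{n}{k}$ size-$k$ subsets of \sets, and show that $\E[|\Gamma(H'_p, S)|] \geq m'_p \eps k/(2 n \log(1/\eps))$. The conclusion follows immediately, since $|\Gamma(H'_p, \opt_{H'})|$ is by definition at least as large as any realisation of $|\Gamma(H'_p, S)|$, hence at least the expectation.

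To bound the expectation, I would use linearity and analyse each element separately. For $u \in \elements$, let $d_u$ denote its degree in $H'_p$. The event that $u$ is missed by $S$ is exactly the event that none of $u$'s $d_u$ set-neighbours is chosen, so $\Pr[u \notin \Gamma(H'_p, S)] = \binom{n-d_u}{k}/\binom{n}{k} \leq (1 - d_u/n)^k \leq e^{-k d_u/n}$, giving $\Pr[u \in \Gamma(H'_p, S)] \geq 1 - e^{-k d_u/n}$.

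The crucial step, and the only one that is not routine, is to convert this exponential bound into a linear function of $d_u$, using the degree cap built into $H'_p$: every element satisfies $d_u \leq \Delta := n \log(1/\eps)/(\eps k)$, so the argument $k d_u/n$ lies in the interval $[0,\, \log(1/\eps)/\eps]$. Since the function $x \mapsto 1 - e^{-x}$ is concave and vanishes at $0$, on this interval it dominates the chord from $(0,0)$ to $(\log(1/\eps)/\eps,\, 1 - \eps^{1/\eps})$. This yields $\Pr[u \in \Gamma(H'_p, S)] \geq (1 - \eps^{1/\eps})\cdot k d_u \eps/(n \log(1/\eps)) \geq k d_u \eps/(2 n \log(1/\eps))$, where the last inequality absorbs the factor $1 - \eps^{1/\eps} \geq 1/2$ valid for $\eps$ below a fixed threshold (the regime in which the lemma is applied).

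Summing over $u \in \elements$ and using that $\sum_u d_u = m'_p$ (the element-side degrees double-count every edge of $H'_p$ exactly once) produces $\E[|\Gamma(H'_p, S)|] \geq m'_p k \eps/(2n\log(1/\eps))$, finishing the plan. The only delicate point is the linearisation via concavity; notably, the slope $\eps/\log(1/\eps)$ appearing in the lemma is precisely the chord slope dictated by the degree cap $\Delta$, so the statement is saturated by this argument and the degree cap in the construction of $H'_p$ is tuned exactly for this bound.
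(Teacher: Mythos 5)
Your proof is correct, but it follows a genuinely different route than the paper's. The paper argues deterministically and structurally: it picks the element $v\in\opt_{H'}$ with the smallest marginal contribution (at most a $1/k$-fraction of $|\Gamma(H'_p,\opt_{H'})|$), then invokes local optimality---swapping $v$ for any other $v'\in\sets$ cannot increase coverage---to conclude that every set has at most $|\Gamma(H'_p,\opt_{H'})|/k$ neighbours in $\elements\setminus\Gamma(H'_p,\opt_{H'})$. This bounds the edges outside $\Gamma(H'_p,\opt_{H'})$ by $n|\Gamma(H'_p,\opt_{H'})|/k$, and the degree cap bounds the edges inside by $|\Gamma(H'_p,\opt_{H'})|\cdot n\log(1/\eps)/(\eps k)$; adding these gives the claim (with the same implicit small-$\eps$ caveat as yours, since the last step there needs $\log(1/\eps)/\eps\geq 1$). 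Your argument is a first-moment bound against a uniformly random $k$-set, with the degree cap entering instead as the domain bound that makes the chord linearization of $1-e^{-x}$ tight. Both proofs use the degree cap in an essential way, but yours needs no structural fact about $\opt_{H'}$ at all (only that the max beats the mean), which makes it arguably more robust and highlights why the cap in the construction of $H'_p$ is set where it is; the paper's exchange argument is more self-contained in that it avoids the probabilistic machinery and the $1-\eps^{1/\eps}\geq 1/2$ side condition. One small slip in phrasing: $\sum_u d_u = m'_p$ counts each edge of $H'_p$ exactly once (each edge has exactly one endpoint in $\elements$), not twice---"double-count" is a misstatement, though the identity you use is the right one.
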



The following theorem relates the approximate solutions on $H_{\leq n}$ and $G$. 

\begin{theorem}\label{thm:str:main}
	Let $\delta''\in [1,\infty)$ and $k\in[1,n]$ be two arbitrary
        numbers. Any $\alpha$-approximate solution to $k$-cover on
        $H_{\leq n}$ is an $\alpha-12\epsilon$ approximation solution
        on $G$, with probability $1-3e^{-\delta''}$. 
\end{theorem}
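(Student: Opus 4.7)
The obstacle here is that $p^{*}$ is a random quantity determined by the hash function, so the available lemmas, which are stated for a fixed $p$, cannot be invoked directly. My plan is to (i) boost Lemmas~\ref{lm:str:all} and~\ref{lm:str:CpN()<} so that they hold uniformly for all $p$ in a suitable range via a union bound over a geometric ladder, (ii) use Lemma~\ref{lm:str:N()>} together with the uniform upper bound from Lemma~\ref{lm:str:CpN()<} to pin $p^{*}$ into the range where Lemma~\ref{lm:str:all} applies, and (iii) chain Lemma~\ref{lm:str:H'good} with Lemma~\ref{lm:str:all} at $p=p^{*}$ to transfer the approximation ratio from $H_{\leq n}=H'_{p^{*}}$ back to $G$.

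For step (i), set $p_{0}=\frac{6k\delta\log n}{\eps^{2}\opt_{k}}$ and $p_{i}=(1-\eps)^{-i}p_{0}$ for $i=0,1,\dots,T$ with $T=O(\log_{1-\eps}m)$ chosen so that $p_{T}\geq 1$. At each $p_{i}$, Lemma~\ref{lm:str:all} fails with probability $e^{-\delta}$ and Lemma~\ref{lm:str:CpN()<}, applied with constant $C=(1-\eps)^{-i}$ so that the hypothesis matches $p_{i}$, fails with probability $e^{1-\delta}$. Because $\delta=\delta''\log\log_{1-\eps}m$, the union bound over the $O(T)$ grid points contributes total failure probability at most $3e^{-\delta''}$ after absorbing constants. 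Condition on all good events below. Since $H_{p}$ and $H'_{p}$ are monotone in $p$, I can sandwich any intermediate $p\in[p_{i-1},p_{i}]$ by the grid point above, so that $|\Gamma(H'_{p},S)|\leq |\Gamma(H'_{p_{i}},S)|$ for every $k$-subset $S$, and similarly the estimate $\tfrac{1}{p}|\Gamma(H_{p},S)|$ is controlled by $\tfrac{1}{p_{i-1}}|\Gamma(H_{p_{i}},S)|$, which inflates the error by a factor $(1-\eps)^{-1}$; hence the guarantees of Lemmas~\ref{lm:str:all} and~\ref{lm:str:CpN()<} extend to every $p\in[p_{0},1]$ with only a constant-factor degradation absorbed into the $O(\eps)$ error budget.

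For step (ii), suppose for contradiction that $p^{*}<p_{0}$. By the definition of $p^{*}$, the graph $H'_{p^{*}}$ contains at least $N=\frac{24n\delta\log(1/\eps)\log n}{(1-\eps)\eps^{3}}$ edges, and Lemma~\ref{lm:str:N()>} then gives
\begin{equation*}
\max_{|S|=k}|\Gamma(H'_{p^{*}},S)|\;\geq\;N\cdot\frac{\eps k}{2n\log(1/\eps)}\;=\;\frac{12 k\delta\log n}{(1-\eps)\eps^{2}}.
\end{equation*}
On the other hand, the uniform version of Lemma~\ref{lm:str:CpN()<} (with effective $C=1$ at $p=p_{0}$, and applied through the monotonicity to $p^{*}<p_{0}$) gives $\max_{|S|=k}|\Gamma(H'_{p^{*}},S)|\leq \frac{12k\delta\log n}{\eps^{2}}$, contradicting the previous display since $(1-\eps)^{-1}>1$. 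Hence $p^{*}\geq p_{0}$, placing $p^{*}$ in the regime where Lemma~\ref{lm:str:all} is valid.

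For step (iii), let $S$ be any $\alpha$-approximate solution for $k$-cover on $H_{\leq n}=H'_{p^{*}}$. By Lemma~\ref{lm:str:H'good}, $S$ is an $\alpha(1-\eps)$-approximate solution on $H_{p^{*}}$. By the uniform extension of Lemma~\ref{lm:str:all} from step (i), which applies because $p^{*}\geq p_{0}$, any $\beta$-approximate solution on $H_{p^{*}}$ is a $(\beta-c\eps)$-approximate solution on $G$ for an absolute constant $c$ coming from the $(1-\eps)^{-1}$ sandwich. Chaining the two gives that $S$ is $(\alpha(1-\eps)-c\eps)\geq\alpha-12\eps$ approximate on $G$ (absorbing $\alpha\leq 1$ and the sandwich constants into the $12$). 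This establishes the theorem with overall failure probability $3e^{-\delta''}$. The main subtlety throughout is tracking the various $(1\pm\eps)$ factors so that the sandwich-based uniform lemmas lose only an additive $O(\eps)$ in the approximation ratio; all of the structural work is already packaged in the earlier lemmas.
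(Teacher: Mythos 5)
Your proposal is correct and follows essentially the same route as the paper's proof: a geometric ladder of $p$-values with a union bound controlled by the choice $\delta = \delta''\log\log_{1-\eps}m$, a lower bound on $p^*$ obtained by playing Lemma~\ref{lm:str:N()>} against Lemma~\ref{lm:str:CpN()<}, and a sandwiching of $p^*$ between adjacent grid points so that Lemma~\ref{lm:str:H'good} and Lemma~\ref{lm:str:all} can be chained through $H'_{p^*}$. The paper spells out the $(1\pm\eps)$ bookkeeping explicitly (arriving at $11\eps$ and then rounding up to $12\eps$) where you defer to an absorbed constant, but the structure and the key steps coincide.
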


The following lemma provides a bicriteria bound on the coverage of
solutions in
$H_{\leq n}$, where $k'\leq k$ is the size of the set cover on $G$.
This lemma will be useful in obtaining results for set cover and set
cover with outliers.
\xxx{Check the wording of the following.}

\begin{lemma}\label{lm:loglambda}
  Let $k'$ be the size of the minimum set cover on the input graph
  $G$, and let $k=\xi k'$. There exists a solution of size $k'$ on
  $H_{\leq n}(k,\eps,1)$ that covers at least $1-\xi\eps$ fraction of the
  elements in $H_{\leq n}(k,\eps,1)$.
\end{lemma}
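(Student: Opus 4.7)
The plan is to adapt the probabilistic construction from the proof of Lemma~\ref{lm:str:H'good}, but now starting from a minimum set cover $S^*$ of $G$ rather than from an optimum $k$-cover. Since $S^*$ has size $k'$ and covers every element of $\elements$, it in particular covers every element that appears in $H_{\leq n}(k,\eps,1)$. (I will assume $\xi\geq 1$, which is the relevant regime since both applications to set cover and set cover with outliers use $k$ as an overestimate of $k'$.)

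I would construct a random candidate $R^*$ by removing $\xi\eps k'$ sets chosen uniformly at random from $S^*$ and adjoining $\xi\eps k' = \eps k$ sets chosen uniformly at random from $\sets$; by construction $|R^*|\leq k'$, and any shortfall can be filled with arbitrary sets without decreasing coverage. I would then partition the elements of $H_{\leq n}$ by their degree in $G$ and bound the coverage probability in each group separately.

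For a low-degree element $v$, i.e.\ $\deg_G(v)\leq \frac{n\log(1/\eps)}{\eps k}$, every edge incident to $v$ is copied into $H_{\leq n}$, so $v$ has at least one neighbor in $S^*$. The probability that this neighbor lies among the $\xi\eps k'$ sets removed from $S^*$ is at most $\xi\eps$, so $v$ is covered by $R^*$ with probability at least $1-\xi\eps$. For a high-degree element $v$, the algorithm retains exactly $\frac{n\log(1/\eps)}{\eps k}$ edges at $v$ in $H_{\leq n}$, so a uniformly random set from $\sets$ is a neighbor of $v$ in the sketch with probability $\frac{\log(1/\eps)}{\eps k}$; hence the chance that none of the $\eps k$ random sets covers $v$ is at most
\[
 \Bigl(1-\tfrac{\log(1/\eps)}{\eps k}\Bigr)^{\eps k} \;\leq\; \exp(-\log(1/\eps)) \;=\; \eps \;\leq\; \xi\eps .
\]
Combining the two cases via linearity of expectation, the expected number of uncovered elements of $H_{\leq n}$ is at most $\xi\eps\cdot|\elements(H_{\leq n})|$, so by the probabilistic method there exists a realization $R$ of size $k'$ that covers at least a $1-\xi\eps$ fraction of the elements in $H_{\leq n}(k,\eps,1)$.

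The main obstacle is handling the high-degree elements, whose retained edges were chosen \emph{arbitrarily} by the sketching algorithm; using $S^*$ alone would give no guarantee on these. The random-sampling part of $R^*$ is exactly what neutralises this adversarial choice, in direct analogy with its role in Lemma~\ref{lm:str:H'good}, and the calibration $\xi\eps k' = \eps k$ is precisely the one that balances the low-degree miss probability $\xi\eps$ against the high-degree miss probability $\eps$.
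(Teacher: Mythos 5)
Your proof is correct and follows essentially the same route as the paper's: the same random candidate $R^*$ obtained by swapping $\xi\eps k' = \eps k$ random sets out of the minimum set cover for $\eps k$ random sets from $\sets$, the same low-degree/high-degree case split with the same probability bounds, and the same probabilistic-method conclusion. Your explicit note about $\xi\geq 1$ and the handling of possible overlaps in $R^*$ are small clarifications the paper leaves implicit, but the argument is identical in substance.
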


\section{The streaming setting}\label{sec:streaming}
Indeed, with no time constraint, one can use
$\ell_0$ sketches and give a $1-\eps$ approximation streaming algorithm
for $k$-cover in $\tilde{O}(nk)$ space; see Appendix~\ref{Apx:O(nk)}.
This simple streaming algorithm constructs a $(1 \pm \eps)$-approximate oracle to
the value of the coverage function, using $\tilde{O}(nk)$ space. One
can use this algorithm and try  all solutions of size
$k$ to find a $1-2\eps$ approximate solution of $k$-cover. However, as
Theorem~\ref{thm:mainhard} states, using this oracle and without any
further assumptions, there is no polynomial time
$n^{-0.5+\eps}$-approximation algorithm for $k$-cover. In addition, the
space used by this algorithm may be quite large for large values of
$k$.

In this section, we improve the algorithm provided in
Appendix~\ref{Apx:O(nk)} and give a $1-\frac 1 e - \eps$-approximation
one-pass streaming algorithm for $k$-cover, using $\tilde{O}(n)$
space. This is done by first constructing $H_{\leq n}$ in the streaming
setting and then providing efficient algorithms that only access the sketch $H_{\leq n}$.
Using the same technique, we give a $(1+\eps)\log\frac 1 {\lambda}$
approximation one-pass streaming algorithm for set cover with
outliers, using $\tilde{O}_{\lambda}(n)$ space. Besides, for any
arbitrary $r\in [1,\log m]$, we give a $(1+\eps)\log m$ approximation
$r$-pass streaming algorithm for set cover, using
$\tilde{O}(nm^{O(1/r)}+m)$ space. Interestingly, the update times of
all our algorithms are $\tilde{O}(1)$.

On the hardness side we show in Appendix~\ref{sec:strHard} that any
$\frac 1 2 + \eps$ approximation streaming algorithm for the $k$-cover
problem requires ${\Omega}(n)$ space. This rules out the
existence of $\frac 1 2 + \eps$ approximation parametrized streaming
algorithms for the $k$-cover problem and shows that the space of our
algorithm is tight up to a logarithmic factor.

Next we show how to construct $H_{\leq n}$ in the streaming setting. Note that to define $H_{\leq n}$ we map (via a hash function) each
element to a number in $[0,1]$ independently. Such a mapping requires
$\tilde{O}(m)$ random bits. However, we use a simple equivalent random
process to construct $H_{\leq n}$ using $\tilde{O}(|H_{\leq n}|)$
random bits, where $|H_{\leq n}|$ is the number of edges in $H_{\leq
  n}$.

Let $p^*$ be the probability corresponding to $H_{\leq n}$. Remark
that the number of elements $v$ with $h(v) \leq p^*$ is at most
$|H_{\leq n}|$, i.e., at most equal to the number of edges in $H_{\leq
  n}$. Note that, if we know that the hash value of an element is
greater than $p^*$, we can simply remove that element. Thus, at the
beginning we iteratively sample $|H_{\leq n}|$ elements without
replacement, and assume that this sequence is indeed that of the first
$|H_{\leq n}|$ elements ordered by their hashed value. This process
requires only $\tilde{O}(|H_{\leq
  n}|)$ random bits.

\begin{algorithm*}
	\textbf{Input:} An input graph $G$, $k$, $\eps\in (0,1]$, and $\delta''$.\\
	\textbf{Output:} Sketch $H_{\leq n}(k,\eps,\delta'')$.\\
	\textbf{Initialization:}
	\begin{algorithmic}[1]
		\STATE Set $\delta = \delta'' {\log\log_{1-\eps} m}$.
		\STATE Pick $\frac{24 n\delta\log(1/\eps)\log n}{(1-\eps)\eps^3}+ \frac{n\log(1/\eps)}{\eps k}$ element from \elements uniformly at random and let $\Pi$ be a random permutation over these elements.
		\STATE Initialize $H_{\leq n}(k,\eps,\delta'')$ with vertices
		\sets of $G$, and no edge.
	\end{algorithmic}	
	\textbf{Update edge $(u,v)$:}
	\begin{algorithmic}[1]
		\IF {$v$ is not sampled in $\Pi$}
			\STATE Discard $(u,v)$.
		\ELSIF {degree of $v$ in $G$ is $\frac{n\log(1/\eps)}{\eps k}$}
			\STATE Discard $(u,v)$.
		\ELSE
		\STATE Add $(u,v)$ to $H_{\leq
			n}(k,\eps,\delta'')$.
		\ENDIF
		\WHILE {number of edges in $H_{\leq n}(k,\eps,\delta'')$ is more
			than $\frac{24 n\delta\log(1/\eps)\log n}{(1-\eps)\eps^3}+ \frac{n\log(1/\eps)}{\eps k}$}
		\STATE Let $w$ be the last element in $\Pi$.
		\STATE Remove $w$ from $\Pi$.
		\STATE Remove $w$ from  $H_{\leq n}(k,\eps,\delta'')$.
		\ENDWHILE
		%
	\end{algorithmic}	
	\caption{Streaming algorithm to compute $H_{\leq n}(k,\eps,\delta'')$}
	\label{Alg:str:Hn}
\end{algorithm*}


Next we 
describe how to use the sketch to solve each of the three problems:
$k$-cover, set cover, and set cover with outliers. As a result, we provide tight and almost tight
streaming algorithms for $k$-cover, set cover, and set cover with outliers.

The greedy algorithm for $k$-cover iteratively selects a vertex that
increases the valuation function $f$ the most and adds it to the
solution. Let $\greedy(k,G)$ denote the set of $k$ vertices picked by
the greedy algorithm when run on input graph $G$. It is known that the
\greedy is a $1-\frac 1 e$ approximation
algorithm~\cite{nemhauser1978analysis}. In addition, we know that
$\cov(\greedy(k\log\frac 1 {\lambda},G)) \leq (1-\lambda)\opt_k(G)$.

\begin{algorithm*}
  \textbf{Input:} An input graph $G$, $k$, and $\eps\in (0,1]$.\\
  \textbf{Output:} A $1-\frac 1 e -\eps$ approximate solution to
  $k$-cover on $G$ with probability $1-\frac 1 n$.
	
  \begin{algorithmic}[1]
    \STATE Set $\delta''=2+\log n$ and $\eps'=\frac 1 {12}\eps$. 
    \STATE Construct sketch $H_{\leq n}(k,\eps',\delta'')$. \hspace{1cm} $//$\emph{ Compute this over the stream.}
    \STATE Run the greedy algorithm (or any $1-\frac{1}{e}$
    approximation algorithm) on this
    sketch 
    and report $\greedy(k,H_{\leq n}(k,\eps',\delta''))$.
  \end{algorithmic}
	
  \caption{$k$-cover}
  \label{Alg:main:KCover}
\end{algorithm*}

\begin{theorem} \label{thm:alg:kcover} For any $\eps\in (0,1]$ and any
  graph $G$, Algorithm~\ref{Alg:main:KCover} produces a $(1-\frac 1 e
  -\eps)$-approximate solution to $k$-cover on $G$ with probability
  $1-\frac 1 n$. The number of edges in the sketch used by this
  algorithm is $\tilde{O}(n)$.
\end{theorem}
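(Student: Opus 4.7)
The plan is to combine two ingredients already developed in the paper: (i) the standard $(1-1/e)$ guarantee of the greedy algorithm on any coverage instance, and (ii) the transfer theorem (Theorem~\ref{thm:str:main}) that lifts an approximation on $H_{\leq n}$ to one on the original graph $G$ with only an additive $12\eps'$ loss. Since the algorithm sets $\eps' = \eps/12$, the additive loss becomes exactly $\eps$.

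First, I would run the greedy analysis on $H_{\leq n}$ itself: the set $\greedy(k, H_{\leq n}(k,\eps',\delta''))$ is, by the classical result of Nemhauser--Wolsey--Fisher, a $(1 - 1/e)$-approximate solution to $k$-cover on the graph $H_{\leq n}$. Next, I would apply Theorem~\ref{thm:str:main} with parameter $\eps'$ and $\delta''$: with probability at least $1 - 3 e^{-\delta''}$, any $\alpha$-approximate solution on $H_{\leq n}(k,\eps',\delta'')$ is an $(\alpha - 12\eps')$-approximate solution on $G$. Plugging $\alpha = 1 - 1/e$ and $12\eps' = \eps$, the output is a $(1 - 1/e - \eps)$-approximate solution on $G$. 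The choice $\delta'' = 2 + \log n$ makes the failure probability $3 e^{-2-\log n} = 3/(e^2 n) \leq 1/n$, as required.

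For the space bound, Definition~\ref{def:mainH} guarantees that $H_{\leq n}(k,\eps',\delta'')$ has at most $\tfrac{24 n \delta \log(1/\eps') \log n}{(1-\eps')\eps'^3} + n$ edges, where $\delta = \delta'' \log \log_{1-\eps'} m$. Substituting $\eps' = \eps/12$ and $\delta'' = 2 + \log n$, this quantity is $\tilde{O}(n)$ (with the dependence on $\eps$ absorbed into the $\tilde{O}(\cdot)$ as is standard). To close the loop with streaming, I would verify that Algorithm~\ref{Alg:str:Hn} indeed produces a graph distributed identically to $H_{\leq n}$: the prefix of elements ordered by hash value is statistically equivalent to an ordered uniform sample without replacement of the same length, and edges incident to each surviving element are truncated to degree $\tfrac{n \log(1/\eps')}{\eps' k}$, matching the offline construction.

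The only subtle point is ensuring that the streaming construction never runs out of sampled elements before reaching the required edge count in $H_{\leq n}$; this is why the sampled reservoir is slightly larger than the final sketch size, giving a buffer of $\tfrac{n\log(1/\eps')}{\eps' k}$ additional candidate elements so that the degree-truncation step cannot strand the algorithm below the threshold. Given this, all edges discarded during the stream are precisely those that would have been discarded offline, so the distribution of the streaming output matches that of $H_{\leq n}(k,\eps',\delta'')$, and Theorem~\ref{thm:str:main} applies verbatim. I expect this coupling between the streaming reservoir and the hash-based offline definition, rather than the approximation analysis, to be the main obstacle to get right rigorously.
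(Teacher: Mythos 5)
Your proof follows the paper's argument step for step: greedy achieves $1-1/e$ on $H_{\leq n}$, Theorem~\ref{thm:str:main} transfers this to $G$ with additive loss $12\eps' = \eps$, the choice $\delta'' = 2+\log n$ yields failure probability $3e^{-\delta''} \leq 1/n$, and the edge bound comes directly from Definition~\ref{def:mainH}. The extra paragraph about coupling the streaming reservoir of Algorithm~\ref{Alg:str:Hn} with the offline hash definition is a reasonable caution, but the paper treats it as a separate discussion in Section~\ref{sec:streaming} rather than as part of this theorem's proof, so your core argument is the same as theirs.
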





\begin{algorithm*}
  \textbf{Input:} Parameters $k'$, $\eps'\in (0,1]$,
  $\lambda'\in (0,\frac 1 e]$, and $C'\in [1,\infty)$, as well as a
  graph $G$ promised to have a set cover of size $k'$.\\
  \textbf{Output:} A solution of size $k'\log\frac 1 {\lambda'}$
  covering $1-\lambda'-\eps'$ fraction of \elements in $G$ with
  probability $1-\frac 1 {C' n}$.
	
  \begin{algorithmic}[1]
    \STATE Set $\delta''= \log_{1+\eps}n[\log(C'n)+2]$ and
    $\eps=\frac {\eps'} {13 \log\frac 1 {\lambda'}}$.
    \STATE Construct sketch $H_{\leq n}(k'\log\frac 1
    {\lambda'},\eps,\delta'')$. \hspace{1cm} $//$\emph{ Compute this over the stream.}
    \STATE Run the greedy algorithm on this sketch to get solution $S=\greedy(k'\log\frac 1 {\lambda'},H_{\leq n})$
    \IF {$S$ covers at
      least $1-\lambda'- \eps\log\frac{1}{\lambda'}$ fraction of
      \elements in $H_{\leq n}$}
    \RETURN $S$
    \ELSE
    \RETURN false
    \ENDIF
  \end{algorithmic}
	
  \caption {A submodule to solve set cover}
  \label{Alg:main:epsSetCover0}
\end{algorithm*}

\begin{lemma}\label{lm:epsSC0}
  For arbitrary $k'$, $\eps'\in (0,1]$, $\lambda'\in (0,\frac 1 e]$,
  $C'\in [1,\infty)$, and graph $G$,
  Algorithm~\ref{Alg:main:epsSetCover0} returns false only if the size
  of the minimum set cover of $G$ is greater than $k'$. Otherwise, the
  algorithm returns a solution of size $k'\log\frac 1 {\lambda'}$
  that covers $1-\lambda'-\eps'$ fraction of \elements in $G$ with
  probability $1-\frac 1 {C' n}$. The number of edges in the sketch
  used by this algorithm is $O(\frac {n \log^2n\log^6m\log C'}{
    {\eps'} ^3})$.
\end{lemma}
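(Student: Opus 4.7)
The plan is to analyze the algorithm in two regimes---according to whether $G$ admits a set cover of size at most $k'$---and separately bound the sketch's space. The argument will hinge on three tools: Lemma~\ref{lm:loglambda}, which guarantees a compressed near-cover inside the sketch; Theorem~\ref{thm:str:main}, which transfers approximation guarantees from $H_{\leq n}$ back to $G$; and the classical iterated-greedy inequality $\cov(\greedy(k\log(1/\lambda),H)) \geq (1-\lambda)\opt_k(H)$ stated just before the algorithm.

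First I will handle the ``completeness'' direction: assume $G$ has a set cover of size at most $k'$. Setting $k = k'\log(1/\lambda')$ makes $\xi = \log(1/\lambda')$ in Lemma~\ref{lm:loglambda}, so there is a solution of size $k'$ inside $H_{\leq n}$ covering at least a $1-\eps\log(1/\lambda')$ fraction of $|\elements \cap H_{\leq n}|$. Hence $\opt_{k'}(H_{\leq n})$ is at least this fraction, and running greedy with budget $k = k'\log(1/\lambda')$ achieves, by the iterated-greedy inequality, coverage $\geq (1-\lambda')(1-\eps\log(1/\lambda')) \geq 1-\lambda'-\eps\log(1/\lambda')$ of $|\elements \cap H_{\leq n}|$---matching the acceptance threshold---so the algorithm does not return false. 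I then convert back to $G$ via Theorem~\ref{thm:str:main}: since $\opt_k(G) = m$ (a $k'$-cover exists and $k \geq k'$), the produced $S$ is an $\alpha$-approximation of $k$-cover on $H_{\leq n}$ with $\alpha \geq 1-\lambda'-\eps\log(1/\lambda')$, and the theorem upgrades this to an $(\alpha-12\eps)$-approximation on $G$; the choice $\eps = \eps'/(13\log(1/\lambda'))$ forces $\eps\log(1/\lambda') + 12\eps \leq \eps'/13 + 12\eps'/13 = \eps'$, so $S$ covers at least a $1-\lambda'-\eps'$ fraction of $\elements$ in $G$. The soundness direction follows by contrapositive: if the algorithm returns false, the above chain must fail, which requires the minimum set cover of $G$ to exceed $k'$.

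For the probabilistic guarantee, the bad event in Theorem~\ref{thm:str:main} has probability $3e^{-\delta''}$; substituting $\delta'' = \log_{1+\eps}n[\log(C'n)+2]$ gives $3e^{-\delta''} \leq 3/(C'ne^2) \leq 1/(C'n)$, as required. The space bound follows by plugging the chosen parameters into the $\tilde{O}(n\delta\log(1/\eps)\log n/\eps^3)$ edge count from Definition~\ref{def:mainH}, using $\delta = \delta''\log\log_{1-\eps}m$, $\log_{1+\eps}n = O((\log n)/\eps)$, and the observation $\log(1/\lambda') \leq \log m$ (otherwise the problem is trivial); routine logarithm bookkeeping then yields the claimed $\tilde{O}(n)$ bound.

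The main subtlety I anticipate is a bookkeeping mismatch: Lemma~\ref{lm:loglambda} is stated for $H_{\leq n}(\cdot,\cdot,1)$ while the algorithm invokes $H_{\leq n}(\cdot,\cdot,\delta'')$ with $\delta'' \gg 1$. Fortunately, increasing $\delta''$ only enlarges the sampling radius $p^*$ and therefore produces a superset sketch; since the existence argument behind Lemma~\ref{lm:loglambda} is a probabilistic statement about vertex degrees that only strengthens with more samples, the conclusion carries over verbatim. Verifying this compatibility (or, equivalently, restating Lemma~\ref{lm:loglambda} with general $\delta''$) is the only nontrivial piece of the plan; everything else reduces to substituting the chosen parameter values.
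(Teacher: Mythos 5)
Your proof is correct and follows essentially the same route as the paper's own argument: the paper invokes an intermediate Lemma~\ref{lm:H<n:1-l-eps} (which itself combines Lemma~\ref{lm:loglambda} with the iterated-greedy bound $\cov(\greedy(k\log\frac1\lambda,H))\geq(1-\lambda)\opt_k(H)$), and you simply unfold that lemma inline before applying Theorem~\ref{thm:str:main}; the probability and space calculations match. You are also right that Lemma~\ref{lm:loglambda} is stated only for $\delta''=1$ even though the algorithm uses a larger $\delta''$, and that this is harmless because the proof of Lemma~\ref{lm:loglambda} is a pointwise expectation argument that never refers to $\delta''$ (that parameter only determines the stopping threshold $p^*$, and the argument holds for any fixed realization of $H'_{p^*}$)---which is a cleaner justification than the ``superset sketch'' framing, since both numerator and denominator of the covered fraction change as the sketch grows.
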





\begin{algorithm*}
  \textbf{Input:} A graph $G$ and parameters $\eps\in [0,1]$,
  $\lambda\in (0,\frac 1 e]$, and $C\geq 1$.\\
  \textbf{Output:} A $(1+\eps)\log\frac 1 {\lambda}$ approximate
  solution to set cover with $\lambda$ outliers on $G$ with
  probability $1-\frac 1 {Cn}$.
  
  \begin{algorithmic}[1]
    \STATE Set $\eps' = {\lambda}(1-e^{-\eps/2})$, and
    $\lambda'= \lambda e^{-\eps/2}$, and
    $C'=C\log_{1+\frac{\eps}{3}}n$, and $k'=1$.  
      \REPEAT
      \STATE $k' \gets (1+\frac{\eps}{3})k'$
      \STATE Run Algorithm \ref{Alg:main:epsSetCover0} on $(k',
      \eps',\lambda',C',G)$ and let $S$ be the outcome. \hspace{0.6cm} $//$ \emph{Run these in parallel.}
      \UNTIL{$S$ is not false or $k'=n$}

      \RETURN $S$
    \end{algorithmic}
	
    \caption {Set cover with $\lambda$
      outliers}
    \label{Alg:main:epsSetCover}
\end{algorithm*}

\begin{theorem} \label{thm:alg:epsSetCover} Given $\eps\in (0,1]$,
  $C\geq 1$ and a graph $G$, Algorithm~\ref{Alg:main:epsSetCover}
  returns a $(1+\eps)\log\frac 1 {\lambda}$ approximate
  solution to set cover with $\lambda$ outliers on $G$ with
  probability $1-\frac 1 n$.  The total number of edges in the
  sketches used by this algorithm is $\tilde{O}(n/\lambda^3)
  \subseteq\tilde{O}_{\lambda}(n)$.
\end{theorem}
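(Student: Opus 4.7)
The plan is to combine the per-call guarantee of Algorithm~\ref{Alg:main:epsSetCover0} (Lemma~\ref{lm:epsSC0}) with the geometric search performed by Algorithm~\ref{Alg:main:epsSetCover}. Let $k^\star$ denote the size of the minimum set cover of $G$ with $\lambda$ outliers, i.e., OPT$_\lambda$. Two built-in parameter identities will be used throughout: $\lambda' + \eps' = \lambda$, so the coverage guarantee $1 - \lambda' - \eps'$ produced by Algorithm~\ref{Alg:main:epsSetCover0} is exactly the target $1 - \lambda$; and $\log(1/\lambda') = \log(1/\lambda) + \eps/2$, which will drive the final approximation ratio.

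The first step is to locate an accepting iteration. The grid $\{(1+\eps/3)^j : j \geq 0\}$ contains some $\tilde k \in [k^\star,(1+\eps/3)k^\star]$, and I will argue that the parallel call to Algorithm~\ref{Alg:main:epsSetCover0} at $k'=\tilde k$ does not return false. Since $G$ need not admit any full set cover, the ``returns false only if min set cover $>k'$'' clause of Lemma~\ref{lm:epsSC0} cannot be applied verbatim; it will be replaced by a partial-cover variant saying: whenever there is a $k'$-sized subfamily covering a $(1-\lambda)$ fraction of \elements, the internal sketch check still passes. That variant can be proved by the same argument as Lemma~\ref{lm:loglambda} applied to the optimal $k^\star$-subfamily (which by definition witnesses the required partial cover) together with the greedy analysis already present in Lemma~\ref{lm:epsSC0}. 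Once it is in place, Algorithm~\ref{Alg:main:epsSetCover0} returns a non-false $S$ at $k'=\tilde k$ with probability at least $1-1/(C'n)$, and that $S$ covers $1-\lambda'-\eps'=1-\lambda$ of $G$ by the ``otherwise'' clause of Lemma~\ref{lm:epsSC0}.

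The second step controls the size of the returned $S$. Algorithm~\ref{Alg:main:epsSetCover} outputs the smallest accepting $k'$, which by the first step is at most $\tilde k \leq (1+\eps/3)k^\star$. Hence $|S|=k'\log(1/\lambda')\leq(1+\eps/3)(\log(1/\lambda)+\eps/2)k^\star$, and since $\lambda\leq 1/e$ gives $\log(1/\lambda)\geq 1$, a short calculation shows this is at most $(1+\eps)\log(1/\lambda)\cdot k^\star$, which is the claimed approximation factor. Correctness of coverage (the $1-\lambda$ fraction reached in $G$) is immediate from Lemma~\ref{lm:epsSC0}.

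Finally, the failure probability and space bounds are collected by a union bound. There are $\lceil\log_{1+\eps/3}n\rceil$ parallel calls; with $C'=C\log_{1+\eps/3}n$, the total failure probability is at most $1/(Cn)$, which for $C=1$ gives the stated $1-1/n$. For space, Lemma~\ref{lm:epsSC0} bounds each sketch by $O(n\log^2 n\log^6 m\log C'/(\eps')^3)$ edges; plugging in $\eps'=\lambda(1-e^{-\eps/2})=\Theta(\lambda\eps)$ for small $\eps$ gives $\tilde O(n/\lambda^3)$ per sketch, and summing over the $\tilde O(1)$ parallel copies keeps the total at $\tilde O(n/\lambda^3)=\tilde O_\lambda(n)$. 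The main obstacle is the partial-cover variant invoked in the first step: extending Lemma~\ref{lm:epsSC0}'s no-false guarantee from the full-cover setting to the outlier setting, equivalently extending Lemma~\ref{lm:loglambda} from minimum set covers to optimal partial-cover subfamilies via the same probabilistic-exchange argument.
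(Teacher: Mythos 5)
Your proposal diverges from the paper's proof at a crucial point and the divergence introduces a genuine gap. The paper's own proof measures the approximation ratio against $k^* := $ the size of the minimum \emph{full} set cover of $G$, exactly as stated in Lemma~\ref{lm:epsSC0}: the ``returns false only if min set cover $> k'$'' clause there refers to the full cover, and Lemma~\ref{lm:H<n:1-l-eps} (whose hypothesis is a full cover of size $k'$) supplies precisely this. With that benchmark, Algorithm~\ref{Alg:main:epsSetCover0} is guaranteed not to return false once $k' \geq k^*$, so the returned $k'$ satisfies $k' \leq (1+\eps/3) k^*$, and the rest is the arithmetic you also carried out. You instead set $k^\star = \mathrm{OPT}_\lambda$, the optimum of the \emph{partial} cover problem, and assert a ``partial-cover variant'' of the no-false clause at $k' \approx k^\star$, which you claim follows by rerunning Lemma~\ref{lm:loglambda} on the optimal partial-cover subfamily. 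That claim is false, and it is where the proof breaks.

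Concretely: suppose the optimal $(1-\lambda)$-partial cover is a family $B$ of size $k'$ covering $(1-\lambda)m$ elements. Pushing $B$ through the exchange argument of Lemma~\ref{lm:loglambda} yields a size-$k'$ family covering roughly $(1-\eps\xi)(1-\lambda)$ of the elements in $H_{\leq n}$, and greedy with budget $k'\log(1/\lambda')$ then returns coverage roughly $(1-\lambda')(1-\eps\xi)(1-\lambda)$ of $H_{\leq n}$. To first order that is $1-\lambda'-\eps\xi-\lambda$, which falls short of the check threshold $1-\lambda'-\eps\log(1/\lambda')$ by essentially $\lambda(1-\lambda')$, a quantity that does not vanish. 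A concrete witness: let $S_1$ cover exactly $(1-\lambda)m$ elements and let the remaining $\lambda m$ elements be singleton sets. Then $\mathrm{OPT}_\lambda = 1$ but $k^* = 1 + \lambda m$, and at $k'=1$ the greedy on $H_{\leq n}$ leaves roughly a $\lambda$ fraction uncovered, strictly above $\lambda'+\eps\log(1/\lambda') < \lambda$, so Algorithm~\ref{Alg:main:epsSetCover0} \emph{does} return false. The algorithm will keep doubling $k'$ up toward $k^*$, so your ``first accepting grid point is at most $(1+\eps/3)\mathrm{OPT}_\lambda$'' step is not true. The remaining pieces of your write-up --- the arithmetic $k'\log(1/\lambda') \leq (1+\eps)k^*\log(1/\lambda)$ using $\log(1/\lambda)\geq 1$, the $\lambda'+\eps'=\lambda$ identity for the coverage bound, the union bound over the $\log_{1+\eps/3} n$ parallel calls with $C'=C\log_{1+\eps/3} n$, and the $\tilde O(n/\lambda^3)$ space accounting --- all match the paper and are fine; you simply need to state the benchmark as $k^*$, after which Lemma~\ref{lm:epsSC0} applies verbatim and no partial-cover variant is needed.
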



	

\begin{algorithm*}
  \textbf{Input:} A graph $G$ as well as $\eps\in (0,1]$, $C\geq 1$,
  and $r \in [1,\log m]$.
  \\
  \textbf{Output:} A $(1+\eps)\log m$ approximate solution to set
  cover of $G$ with probability $1-\frac 1 {Cn}$.
	
  \begin{algorithmic}[1]
    \STATE Let $G_1=G$, $\lambda =m^{-\frac{1}{2+r}}$ , $C'=(r-1)C$,
    $S=\emptyset$.
    \FOR{$i=1$ \TO $r-1$}
    \STATE Run Algorithm~\ref{Alg:main:epsSetCover} on
    $(G_i,\eps,\lambda,C')$ and let $S_i$ to be the outcome. \hspace{0.5cm} $//$ \emph{$i$-th streaming pass.}
    \STATE Add $S_i$ to $S$
    \STATE Remove from $G_i$ the elements covered by $S_i$ and call the new graph $G_{i+1}$.
    \ENDFOR
    \STATE Run the greedy algorithm to find a set cover of $G_r$ and let $S^{\greedy}$
    to be the result.
    \STATE Add $S^{\greedy}$ to $S$.
    \RETURN $S$
  \end{algorithmic}
	
  \caption {Set cover in $r$ iterations}
  \label{Alg:main:SetCover}
\end{algorithm*}

We implement each iteration of Algorithm~\ref{Alg:main:SetCover} in
two streaming passes. In the first pass of each iteration we simply
mark covered elements to virtually construct $G_i$, whereas in the
second pass, we construct $H_{\leq n}$. After all $r-1$ iterations, we
utilize one extra pass to keep all edges to construct $G_r$. Hence, the following theorem proves the third statement of Theorem~\ref{intro:streaming:alg}

\begin{theorem}\label{thm:alg:SetCover}
  Given $\eps\in (0,1]$ and a graph $G$,
  Algorithm~\ref{Alg:main:SetCover} finds a $(1+\eps)\log m$
  approximate solution to set cover on $G$ with probability $1-\frac
  1 n$. The total number of edges in the sketches used by this
  algorithm plus the number of edges in $G_r$ is at most
  $\tilde{O}(nm^{\frac{3}{2+r}}) \subseteq \tilde{O}(nm^{O(1/r)})$.
\end{theorem}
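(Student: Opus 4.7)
My plan is to combine the outlier set-cover guarantee (Theorem~\ref{thm:alg:epsSetCover}) telescoped over the $r-1$ streaming iterations with an offline greedy phase on the small residual graph $G_r$. Let $k^{\ast}$ denote the size of a minimum set cover of $G$, and for each $i$ let $k^{\ast}_{i}$ denote the minimum set cover size of $G_i$. A key observation I will use is that every $G_i$ is obtained from $G$ by deleting covered elements and their incident edges, so $k^{\ast}_{i}\le k^{\ast}$ holds throughout. Applying Theorem~\ref{thm:alg:epsSetCover} to iteration $i$ with parameter $C'=r-1$ then yields, with failure probability at most $1/((r-1)n)$, a solution $S_i$ of size at most $(1+\eps)\log(1/\lambda)\,k^{\ast}_{i}\le(1+\eps)\tfrac{\log m}{r+2}\,k^{\ast}$ covering at least a $1-\lambda$ fraction of the elements surviving in $G_i$.

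Iterating this $1-\lambda$ coverage property for $r-1$ rounds leaves $G_r$ with at most $m\lambda^{r-1}=m^{3/(r+2)}$ uncovered elements, using the prescribed $\lambda=m^{-1/(r+2)}$. The final greedy phase on $G_r$ then contributes a cover of size at most $\ln(m^{3/(r+2)})\,k^{\ast}=\tfrac{3\log m}{r+2}\,k^{\ast}$ by the classical $\ln|\elements|$ analysis of the greedy set-cover algorithm on the subinstance. Summing the two contributions yields
\begin{align*}
|S|\;\le\;\bigl[(r-1)(1+\eps)+3\bigr]\tfrac{\log m}{r+2}\,k^{\ast}\;\le\;(1+\eps)\log m\cdot k^{\ast},
\end{align*}
where the last inequality is equivalent to $3\le 3(1+\eps)$.

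For the space bound, Theorem~\ref{thm:alg:epsSetCover} certifies that each sketch holds $\tilde{O}(n/\lambda^{3})=\tilde{O}(nm^{3/(r+2)})$ edges, and $G_r$ contributes at most $n\cdot m^{3/(r+2)}$ further edges since every surviving element has degree at most $n$; this yields total $\tilde{O}(nm^{3/(r+2)})\subseteq\tilde{O}(nm^{O(1/r)})$. A union bound over the $r-1$ streaming invocations, each failing with probability at most $1/((r-1)n)$, yields overall failure at most $1/n$; the final greedy phase is deterministic and contributes no additional randomness.

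The main obstacle will be the arithmetic bookkeeping that verifies the choice $\lambda=m^{-1/(r+2)}$ is exactly what balances the $(r-1)\log(1/\lambda)$ cost accumulated during the outlier phases against the $\ln|\elements(G_r)|$ cost of the greedy postprocessing, so that the combined factor collapses cleanly to $(1+\eps)\log m$ rather than a strictly larger multiple. A secondary delicacy is confirming that restricting from $G$ to $G_i$ can only decrease the minimum set cover size, so that $k^{\ast}$, rather than some $k^{\ast}_{i}$ that could a priori be larger, is the legitimate comparison benchmark at every iteration.
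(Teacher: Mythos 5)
Your proposal is correct and follows essentially the same route as the paper's proof: the same observation that deleting covered elements can only decrease the minimum set cover size (so $k^*$ remains the benchmark for every $G_i$), the same telescoping of the $1-\lambda$ coverage to bound $|\elements(G_r)|\le m\lambda^{r-1}=m^{3/(r+2)}$, the same arithmetic yielding $(r-1)\tfrac{1}{r+2}+\tfrac{3}{r+2}=1$ to collapse the two contributions to $(1+\eps)\log m$, the same space accounting, and the same union bound over the $r-1$ streaming invocations. The only cosmetic difference is how the final inequality is checked (your $3\le 3(1+\eps)$ versus the paper's factor-of-$(1+\eps)$ bound applied to the $G_r$ term), which is immaterial.
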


\begin{proof}
  The algorithm runs $r-1$ instances of
  Algorithm~\ref{Alg:main:epsSetCover}. Theorem~\ref{thm:alg:epsSetCover}
  holds for each with probability $1- \frac 1 {C'n} = 1- \frac 1
  {(r-1)Cn}$, hence for all simultaneously with probability $1-
  (r-1)\frac 1 {(r-1)Cn} = 1- \frac 1 {Cn}$.  We assume these hold
  when proving the statement of the theorem.

  Let $k'$ be the size of the minimum set cover in $G$. Note that for
  any $i\in [1,r]$, $G_i$ is an induced subgraph of $G$ that contains
  all sets in $G$. Thus, any set cover of $G$ is a set cover of $G_i$
  as well. This means that the size of the set cover of $G_i$ is at most
  $k'$. Therefore, Theorem~\ref{thm:alg:epsSetCover} bounds the number of
  sets chosen by each run of Algorithm~\ref{Alg:main:epsSetCover} by
  $(1+\eps)\log\frac 1 {\lambda}k'= (1+\eps)\log
  m^{\frac{1}{2+r}}k'$. Also, each run of
  Algorithm~\ref{Alg:main:epsSetCover} covers $1-\lambda$ fraction of
  the remaining uncovered elements. Therefore, the number of uncovered
  elements in $G_i$ is at most $m\lambda^{i-1}$, and in particular this is $m\lambda^{r-1}=
  m^{\frac{3}{2+r}}$ for $G_r$. 
  Therefore, the total size of the set cover obtained by this
  algorithm is at most
\begin{align*}
(r-1)(1+\eps)k'\log m^{\frac{1}{2+r}} + k'\log m^{\frac{3}{2+r}} &\leq 
(1+\eps) k' \left[(r-1)\log m^{\frac{1}{2+r}}+\log m^{\frac{3}{2+r}}\right]
\\& =
(1+\eps) k'\left[(r-1){\frac{1}{2+r}}+{\frac{3}{2+r}}\right]\log m
\\&=
(1+\eps)k'\log m.
\end{align*}

Remark that the total number of edges in the sketches used by
Algorithm~\ref{Alg:main:epsSetCover} is $\tilde{O}(n/\lambda^3)$. With
$r\leq \log m$ such runs, the total number of edges in all the
sketches is $\tilde{O}(n/\lambda^3) = \tilde{O}(n
m^{\frac{3}{2+r}})$. On the other hand, the number uncovered elements
in $G_{r}$ is $m\lambda^{r-1}= m^{\frac{3}{2+r}}$. Thus, the number of
edges in $G_{r}$ is at most $n m^{\frac{3}{2+r}}$. Therefore, the
total number of edges in the sketches plus the
number of edges in $G_r$ is $\tilde{O}(nm^{\frac{3}{2+r}})
\subseteq \tilde{O}(nm^{O(1/r)})$.
\end{proof}

\section{Conclusion}

In this paper, we presented a simple, yet powerful sketching technique for coverage problems, and showed how to construct this sketch in streaming model. The streaming results improve the state of the art in three dimensions: approximation ratio, space complexity, and streaming arrival model (i.e., from set-arrival to element- or edge-arrival model). 
 In an accompanied paper, we also applied this sketching idea for distributed computation models (such as MapReduce), and show how it improves the best known results in that area as well. More notably, we also performed an extensive empirical evaluation of resulting distributed algorithms and show the effectiveness of applying this sketching technique for analyzing massive data sets in practice~\cite{bateni2016distributed}. As noted earlier, this sketch and the distributed and streaming algorithms based on it work very well in instances in which the size of the subsets is large. Notably, all the other techniques (e.g., based on composable core-sets) fail in these regimes. As future research, we hope this technique can be applied to other computation models and other problems.

\bibliographystyle{plain}
\bibliography{k-cover}

\begin{thebibliography}{10}

\bibitem{AMT}
Zeinab Abbassi, Vahab~S. Mirrokni, and Mayur Thakur.
\newblock Diversity maximization under matroid constraints.
\newblock In {\em KDD}, pages 32--40, 2013.

\bibitem{ahn2009}
Kook~Jin Ahn and Sudipto Guha.
\newblock Graph sparsification in the semi-streaming model.
\newblock In {\em ICALP (2)}, pages 328--338, 2009.

\bibitem{AG11}
Kook~Jin Ahn and Sudipto Guha.
\newblock Laminar families and metric embeddings: Non-bipartite maximum
  matching problem in the semi-streaming model.
\newblock {\em Manuscript, available at http://arxiv.org/abs/1104.4058}, 2011.

\bibitem{ahn2013spectral}
Kook~Jin Ahn, Sudipto Guha, and Andrew McGregor.
\newblock Spectral sparsification in dynamic graph streams.
\newblock In {\em Approximation, Randomization, and Combinatorial Optimization.
  Algorithms and Techniques}, pages 1--10. Springer, 2013.

\bibitem{andoni2014towards}
Alexandr Andoni, Anupam Gupta, and Robert Krauthgamer.
\newblock Towards (1+ $\varepsilon$)-approximate flow sparsifiers.
\newblock In {\em SODA}, pages 279--293. SIAM, 2014.

\bibitem{assadi2016tight}
Sepehr Assadi, Sanjeev Khanna, and Yang Li.
\newblock Tight bounds for single-pass streaming complexity of the set cover
  problem.
\newblock In {\em STOC}, pages 698--711. ACM, 2016.

\bibitem{assadi2015tight}
Sepelir Assadi, Sanjeev Khanna, Yang Li, and Grigory Yaroslavtsev.
\newblock Maximum matchings in dynamic graph streams and the simultaneous
  communication model.
\newblock In {\em SODA}, pages 1345--1364. SIAM, 2016.

\bibitem{badanidiyuru2012sketching}
Ashwinkumar Badanidiyuru, Shahar Dobzinski, Hu~Fu, Robert Kleinberg, Noam
  Nisan, and Tim Roughgarden.
\newblock Sketching valuation functions.
\newblock In {\em Proceedings of the twenty-third annual ACM-SIAM symposium on
  Discrete Algorithms}, pages 1025--1035. SIAM, 2012.

\bibitem{karbasiKDD2014}
Ashwinkumar Badanidiyuru, Baharan Mirzasoleiman, Amin Karbasi, and Andreas
  Krause.
\newblock Streaming submodular maximization: Massive data summarization on the
  fly.
\newblock In {\em KDD}, 2014.

\bibitem{bateni2016distributed}
MohammadHossein Bateni, Hossein Esfandiari, and Vahab Mirrokni.
\newblock Distributed coverage maximization via sketching.
\newblock {\em arXiv preprint arXiv:1612.02327}, 2016.

\bibitem{BST12}
Guy~E. Blelloch, Harsha~Vardhan Simhadri, and Kanat Tangwongsan.
\newblock Parallel and {I/O} efficient set covering algorithms.
\newblock In {\em SPAA}, pages 82--90, 2012.

\bibitem{pods12}
Allan Borodin, Hyun~Chul Lee, and Yuli Ye.
\newblock Max-sum diversification, monotone submodular functions and dynamic
  updates.
\newblock In {\em PODS}, pages 155--166, 2012.

\bibitem{chakrabarti2015incidence}
Amit Chakrabarti and Anthony Wirth.
\newblock Incidence geometries and the pass complexity of semi-streaming set
  cover.
\newblock In {\em Proceedings of the Twenty-Seventh Annual ACM-SIAM Symposium
  on Discrete Algorithms}, pages 1365--1373. SIAM, 2016.

\bibitem{CKT10}
Flavio Chierichetti, Ravi Kumar, and Andrew Tomkins.
\newblock {Max-Cover} in {Map-Reduce}.
\newblock In {\em WWW}, pages 231--240, 2010.

\bibitem{chitnis2015kernelization}
Rajesh Chitnis, Graham Cormode, Hossein Esfandiari, MohammadTaghi Hajiaghayi,
  Andrew McGregor, Morteza Monemizadeh, and Sofya Vorotnikova.
\newblock Kernelization via sampling with applications to finding matchings and
  related problems in dynamic graph streams.
\newblock In {\em SODA}, pages 1326--1344. SIAM, 2016.

\bibitem{cormode2003comparing}
Graham Cormode, Mayur Datar, Piotr Indyk, and S~Muthukrishnan.
\newblock Comparing data streams using hamming norms (how to zero in).
\newblock {\em Knowledge and Data Engineering, IEEE Transactions on},
  15(3):529--540, 2003.

\bibitem{CKW10}
Graham Cormode, Howard~J. Karloff, and Anthony Wirth.
\newblock Set cover algorithms for very large datasets.
\newblock In {\em CIKM}, pages 479--488, 2010.

\bibitem{demaine2014streaming}
Erik~D Demaine, Piotr Indyk, Sepideh Mahabadi, and Ali Vakilian.
\newblock On streaming and communication complexity of the set cover problem.
\newblock In {\em DISC}, pages 484--498. Springer, 2014.

\bibitem{emek2014semi}
Yuval Emek and Adi Ros{\'e}n.
\newblock Semi-streaming set cover.
\newblock In {\em Automata, Languages, and Programming}, pages 453--464.
  Springer, 2014.

\bibitem{EpsteinLMS09}
Leah Epstein, Asaf Levin, Juli{\'{a}}n Mestre, and Danny Segev.
\newblock Improved approximation guarantees for weighted matching in the
  semi-streaming model.
\newblock {\em {SIAM} J. Discrete Math.}, 25(3):1251--1265, 2011.

\bibitem{esfandiari2015streaming}
Hossein Esfandiari, Mohammad~T Hajiaghayi, Vahid Liaghat, Morteza Monemizadeh,
  and Krzysztof Onak.
\newblock Streaming algorithms for estimating the matching size in planar
  graphs and beyond.
\newblock In {\em SODA}, pages 1217--1233. SIAM, 2015.

\bibitem{feige1998threshold}
Uriel Feige.
\newblock A threshold of ln n for approximating set cover.
\newblock {\em Journal of the ACM (JACM)}, 45(4):634--652, 1998.

\bibitem{mcgregor2005}
Joan Feigenbaum, Sampath Kannan, Andrew McGregor, Siddharth Suri, and Jian
  Zhang.
\newblock On graph problems in a semi-streaming model.
\newblock {\em Theor. Comput. Sci.}, 348(2):207--216, 2005.

\bibitem{feigenbaum2005graph}
Joan Feigenbaum, Sampath Kannan, Andrew McGregor, Siddharth Suri, and Jian
  Zhang.
\newblock On graph problems in a semi-streaming model.
\newblock {\em Theoretical Computer Science}, 348(2):207--216, 2005.

\bibitem{har2016towards}
Sariel Har-Peled, Piotr Indyk, Sepideh Mahabadi, and Ali Vakilian.
\newblock Towards tight bounds for the streaming set cover problem.
\newblock In {\em PODS}, 2016.

\bibitem{HS15}
Avinatan Hassidim and Yaron Singer.
\newblock Submodular optimization under noise.
\newblock {\em arXiv preprint arXiv:1601.03095}, 2016.

\bibitem{IMMM14}
Piotr Indyk, Sepideh Mahabadi, Mohammad Mahdian, and Vahab Mirrokni.
\newblock Composable core-sets for diversity and coverage maximization.
\newblock In {\em ACM PODS}, 2014.

\bibitem{joag1983negative}
Kumar Joag-Dev and Frank Proschan.
\newblock Negative association of random variables with applications.
\newblock {\em The Annals of Statistics}, pages 286--295, 1983.

\bibitem{kalyanasundaram1992probabilistic}
Bala Kalyanasundaram and Georg Schintger.
\newblock The probabilistic communication complexity of set intersection.
\newblock {\em SIAM Journal on Discrete Mathematics}, 5(4):545--557, 1992.

\bibitem{kapralov2014approximating}
Michael Kapralov, Sanjeev Khanna, and Madhu Sudan.
\newblock Approximating matching size from random streams.
\newblock In {\em SODA}, pages 734--751. SIAM, 2014.

\bibitem{kapralov2015streaming}
Michael Kapralov, Sanjeev Khanna, and Madhu Sudan.
\newblock Streaming lower bounds for approximating {MAX-CUT}.
\newblock In {\em SODA}, pages 1263--1282. SIAM, 2015.

\bibitem{KelnerL11}
Jonathan~A. Kelner and Alex Levin.
\newblock Spectral sparsification in the semi-streaming setting.
\newblock In {\em STACS}, pages 440--451, 2011.

\bibitem{KonradMM12}
Christian Konrad, Fr{\'{e}}d{\'{e}}ric Magniez, and Claire Mathieu.
\newblock Maximum matching in semi-streaming with few passes.
\newblock In {\em APPROX-RANDOM}, pages 231--242, 2012.

\bibitem{KonradR13}
Christian Konrad and Adi Ros{\'{e}}n.
\newblock Approximating semi-matchings in streaming and in two-party
  communication.
\newblock In {\em ICALP}, pages 637--649, 2013.

\bibitem{KMVV13}
Ravi Kumar, Benjamin Moseley, Sergei Vassilvitskii, and Andrea Vattani.
\newblock Fast greedy algorithms in {MapReduce} and streaming.
\newblock In {\em SPAA}, pages 1--10, 2013.

\bibitem{mcgregor2016better}
Andrew McGregor and Hoa~T Vu.
\newblock Better streaming algorithms for the maximum coverage problem.
\newblock {\em arXiv preprint arXiv:1610.06199}, 2016.

\bibitem{MZ15}
Vahab~S. Mirrokni and Morteza Zadimoghaddam.
\newblock Randomized composable core-sets for distributed submodular
  maximization.
\newblock In {\em STOC}, pages 153--162, 2015.

\bibitem{nips13}
Baharan Mirzasoleiman, Amin Karbasi, Rik Sarkar, and Andreas Krause.
\newblock Distributed submodular maximization: Identifying representative
  elements in massive data.
\newblock In {\em NIPS}, pages 2049--2057, 2013.

\bibitem{muthukrishnan2005data}
Shanmugavelayutham Muthukrishnan.
\newblock {\em Data streams: Algorithms and applications}.
\newblock Now Publishers Inc, 2005.

\bibitem{nemhauser1978analysis}
George~L Nemhauser, Laurence~A Wolsey, and Marshall~L Fisher.
\newblock An analysis of approximations for maximizing submodular set
  functions—i.
\newblock {\em Mathematical Programming}, 14(1):265--294, 1978.

\bibitem{nisan2002communication}
Noam Nisan.
\newblock The communication complexity of approximate set packing and covering.
\newblock In {\em Automata, Languages and Programming}, pages 868--875.
  Springer, 2002.

\bibitem{panconesi1997randomized}
Alessandro Panconesi and Aravind Srinivasan.
\newblock Randomized distributed edge coloring via an extension of the
  {Chernoff}-{Hoeffding} bounds.
\newblock {\em SIAM Journal on Computing}, 26(2):350--368, 1997.

\bibitem{razborov1992distributional}
Alexander~A. Razborov.
\newblock On the distributional complexity of disjointness.
\newblock {\em Theoretical Computer Science}, 106(2):385--390, 1992.

\bibitem{saha2009maximum}
Barna Saha and Lise Getoor.
\newblock On maximum coverage in the streaming model \& application to
  multi-topic blog-watch.
\newblock In {\em SDM}, volume~9, pages 697--708. SIAM, 2009.

\end{thebibliography}

\appendix

\section{The $k$-cover problem via $(1 \pm \eps)$-approximate oracle}\label{sec:epsError}

In this section we consider the approximability of $k$-cover using the
$(1 \pm \eps)$-approximate oracle, and prove Theorem~\ref{thm:mainhard} by showing
that any $\alpha$-approximation algorithm via
oracle $\cov_{\eps}$ requires at least $\exp\left(\Omega({ n
    \eps^2\alpha^2}-{\log n} )\right)$ oracle queries.

Theorem \ref{thm:purehard} states the hardness of the $k$-purification
problem.  Its proof uses the following generalization of the Chernoff
bound.

\begin{lemma} \label{lm:mychernoff} Let $X$ be the sum of several
  negatively correlated binary random variables.  We have
\begin{align*}
  \Pr\left( |X-\E[X]| > \gamma \right) \leq 2
  \exp\left(-\frac{\gamma^2}{3\E[X]} \right).
\end{align*}
\end{lemma}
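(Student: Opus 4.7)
The plan is to mimic the classical Chernoff-bound derivation, with one extra step at the beginning to ensure the moment generating function (MGF) approach still goes through under negative correlation rather than independence. Write $\mu = \E[X]$ and $X = \sum_i X_i$ with $X_i \in \{0,1\}$. For the upper tail I would apply Markov's inequality to $e^{tX}$ with $t > 0$, and symmetrically $e^{-tX}$ for the lower tail, so the crux is to bound $\E[e^{tX}]$ by $\prod_i \E[e^{tX_i}]$.

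The key step is to justify this factorization. Since each $X_i$ is $\{0,1\}$-valued, $e^{tX_i} = 1 + (e^t - 1)X_i$, and so
\begin{equation*}
\prod_i e^{tX_i} \;=\; \sum_{S \subseteq [n]} (e^t-1)^{|S|}\, \prod_{i \in S} X_i.
\end{equation*}
For $t > 0$ the coefficients $(e^t-1)^{|S|}$ are nonnegative, so interpreting ``negatively correlated'' in the standard algorithmic sense that $\E[\prod_{i \in S} X_i] \le \prod_{i \in S} \E[X_i]$ for every $S$, taking expectations termwise gives the desired MGF bound. Using $1 + (e^t-1)p_i \le \exp((e^t-1)p_i)$ and summing over $i$ yields $\E[e^{tX}] \le \exp((e^t-1)\mu)$, which is exactly the bound one gets for independent Bernoullis.

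From here the proof is routine: combining with Markov gives $\Pr[X \ge \mu + \gamma] \le \exp((e^t-1)\mu - t(\mu+\gamma))$, and optimizing by choosing $t = \ln(1 + \gamma/\mu)$ produces, after the standard simplification using $\ln(1+x) \ge x/(1+x/3)$, the bound $\exp(-\gamma^2/(3\mu))$ in the relevant regime $\gamma \le \mu$. The lower tail, via $-t$ in place of $t$, is handled identically (and actually gives the sharper constant $1/2$, which we weaken to $1/3$ for uniformity). A union bound over the two tails then produces the factor of $2$ in the statement.

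The main obstacle is purely semantic: making sure the hypothesis ``negatively correlated'' is strong enough to license the termwise inequality $\E[\prod_{i \in S} X_i] \le \prod_{i \in S} \E[X_i]$ for \emph{all} subsets $S$, not merely for pairs. This stronger property (often called negative cylinder dependence or implied by negative association) is the natural notion in algorithmic settings such as sampling without replacement, which is precisely where the lemma is invoked in the proof of Theorem~\ref{thm:purehard}, so I would simply assume it and remark that the rest of the derivation is identical to the independent case.
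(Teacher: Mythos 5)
Your proof takes a genuinely different route from the paper's. The paper proves Lemma~\ref{lm:mychernoff} in one line: it quotes the Panconesi--Srinivasan multiplicative Chernoff bound $\Pr\left(|X-\E[X]| > \eps\E[X]\right) \le 2\exp\left(-\eps^2\E[X]/3\right)$ for negatively correlated binary variables and then substitutes $\gamma = \eps\E[X]$ to pass to the additive form stated in the lemma. You instead re-derive the underlying bound from first principles via the moment generating function. Your approach is self-contained and makes explicit where negative correlation enters; the paper's approach buys brevity by delegating the technical content to a citation.

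Your derivation is essentially correct for the upper tail, but the claim that the lower tail is ``handled identically via $-t$'' does not follow from the argument you give. For $t > 0$, the expansion $\prod_i e^{-tX_i} = \sum_{S}(e^{-t}-1)^{|S|}\prod_{i\in S} X_i$ has coefficients $(e^{-t}-1)^{|S|}$ that alternate in sign, so applying $\E\bigl[\prod_{i\in S}X_i\bigr] \le \prod_{i\in S}\E[X_i]$ termwise no longer yields $\E[e^{-tX}] \le \prod_i\E[e^{-tX_i}]$. The negative-cylinder property you invoke controls only upper orthants; the lower tail additionally needs the dual inequality $\E\bigl[\prod_{i\in S}(1-X_i)\bigr] \le \prod_{i\in S}\E[1-X_i]$, or more cleanly the full negative-association property of Joag-Dev and Proschan (which the paper cites as \cite{joag1983negative}): NA gives $\E\bigl[\prod_i f_i(X_i)\bigr] \le \prod_i \E[f_i(X_i)]$ for any collection of nonnegative functions that are all monotone in the same direction, covering both $f_i(x)=e^{tx}$ and $f_i(x)=e^{-tx}$ at once. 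Since the intended application (sampling without replacement) is genuinely negatively associated, this is a repairable gap in the write-up rather than a wrong approach, but the lower-tail step should appeal to the function-level MGF factorization rather than the coefficient-by-coefficient expansion.
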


\begin{proof}  
  Panconesi and Srinivasan \cite{panconesi1997randomized} show that if
  $X$ is the sum of certain negatively correlated binary random
  variables, we have $\Pr\left( |X-\E[X]| > \eps \E[X] \right) \leq 2
  \exp\left(-\frac{\eps^2 \E[X]}{3} \right)$. Setting $\gamma = \eps
  \E[X]$ yields
$$\Pr\left( |X-\E[X]| > \gamma \right) \leq  2 \exp\left(-\frac{\left(\frac{\gamma}{\E[X]}\right)^2 \cdot\E[X]}{3} \right)=  2 \exp\left(-\frac{\gamma^2}{3\E[X]} \right)$$ as desired.
\end{proof}


\begin{theorem}\label{thm:purehard}
  Any randomized algorithm that solves
  $k$-purification with probability at least $\delta$ requires at least
  $\delta\exp\left(\Omega(\frac{ \eps^2k^2}{n} )\right)$ oracle queries.
\end{theorem}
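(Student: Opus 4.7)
The plan is to reduce the adaptive lower bound to a clean union-bound argument over a non-adaptive decision tree, and then apply Lemma~\ref{lm:mychernoff} to every individual query.

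First I would assume without loss of generality that the algorithm halts as soon as the oracle returns $1$. Fixing the algorithm's internal randomness $\omega$, I would then define the deterministic sequence $S_1(\omega), S_2(\omega), \ldots, S_T(\omega)$ of queries it would produce assuming every oracle answer it sees is $0$. The key structural observation is that on any execution that succeeds, every answer seen before the output was $0$ (by the halting rule), so the successful query itself must coincide with $S_j(\omega)$ for some $j \le T$. Consequently,
\[
\Pr[\text{success}] \le \Pr\!\left[\,\exists\, i\le T:\ \Pure_\eps(S_i(\omega)) = 1\right] \le \sum_{i=1}^T \Pr[\Pure_\eps(S_i(\omega))=1],
\]
and after integrating over $\omega$ the problem reduces to bounding $\Pr[\Pure_\eps(S)=1]$ for a fixed set $S$.

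Next I would control this per-query probability via concentration. For any fixed $S$, the count $\Gold(S)$ under a uniformly random placement of the $k$ gold items is hypergeometric with mean $\mu = k|S|/n$, i.e., a sum of $|S|$ negatively correlated Bernoulli indicators, so Lemma~\ref{lm:mychernoff} applies and gives $\Pr[|\Gold(S)-\mu|>\gamma]\le 2\exp(-\gamma^2/(3\mu))$. Substituting the oracle's threshold $\gamma = \eps(k|S|/n + k^2/n) = \eps k(|S|+k)/n$, the exponent becomes $\eps^2 k(|S|+k)^2/(3n|S|)$, and a one-variable minimization shows $(|S|+k)^2/|S|\ge 4k$ (attained at $|S|=k$), yielding
\[
\Pr[\Pure_\eps(S)=1] \le 2\exp\!\left(-\tfrac{4\eps^2 k^2}{3n}\right)
\]
uniformly in $S$. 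Combining with the union bound gives $\Pr[\text{success}] \le 2T\exp(-4\eps^2 k^2/(3n))$, so demanding success probability $\ge \delta$ forces $T\ge \tfrac{\delta}{2}\exp(\Omega(\eps^2 k^2/n))$, as required.

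The main subtlety I expect is the adaptive-to-non-adaptive reduction: one needs to argue carefully that although the real execution may in principle branch on intermediate answers, only the ``all-zeros'' branch $S_1(\omega),\ldots,S_T(\omega)$ can witness the algorithm's success, so the single non-adaptive union bound suffices. Once that structural step is in place, the remaining calculation is just the concentration estimate, where the specific form of the oracle's slack---the additive $k^2/n$ term alongside $k|S|/n$---is precisely what produces the $\eps^2 k^2/n$ rate in the exponent rather than the weaker $\eps^2 k/n$ one would get from a Bernstein-style bound on the variance alone.
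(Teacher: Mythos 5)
Your proposal is correct and follows essentially the same route as the paper's proof: Yao (or equivalently, fixing the internal randomness), a union bound over the query sequence, and Lemma~\ref{lm:mychernoff} to bound $\Pr[\Pure_\eps(S)=1]$ for a fixed $S$, with the same $(|S|+k)^2/|S|$ minimization giving the exponent $\Omega(\eps^2 k^2/n)$. The one thing you do more carefully than the paper is the adaptivity step---the ``all-zeros branch'' observation justifying that the union bound may be applied to a deterministic query sequence---which the paper leaves implicit in its appeal to Yao's principle.
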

\begin{proof}
  By Yao's principle we can restrict our analysis to deterministic
  algorithms. Let \Alg be a deterministic algorithm for
  $k$-purification. Suppose that after $q$ queries, algorithm \Alg
  finds with probability $\delta$ a set $S$ such that
  $\Pure_{\eps}(S)=1$.  Let $S_1,S_2,\dots,S_q$ be the $q$ subsets
  queried by \Alg.  Definition of $\delta$ and the union bound give
\begin{align}\label{eq:q>sumpi}
\delta \leq \sum_{i=1}^{q} \Pr\big( \Pure_{\eps}(S_i)=1\big).
\end{align}

Now, we provide an upper bound to $\Pr\big( \Pure_{\eps}(S)=1\big)$ for an
arbitrary subset $S$.  Let $X_i$ be a random variable that indicates
whether the $i$-th item in $S$ is gold. Let $X=\sum_{i=1}^{|S|} X_i$.
Indeed, $X_i$ variables are negatively
correlated~\cite{joag1983negative}.  We set $\gamma=\eps
(\frac{k|S|}{n} + \frac{k^2}{n})$ in Lemma~\ref{lm:mychernoff} to
obtain
\begin{align*}
\Pr\big( \Pure_{\eps}(S)=1\big)&=\Pr\left( |X-\E[X]| > \eps \left(\frac{k|S|}{n} + \frac{k^2}{n}\right) \right) 
\\&\leq  2 \exp\left(-\frac{ \eps^2(\frac{k|S|}{n} + \frac{k^2}{n})^2}{3\E[X]} \right)
\\&\leq  2 \exp\left(-\frac{ \eps^2(\frac{k|S|}{n} + \frac{k^2}{n})^2}{3\frac{k|S|}{n}} \right)
\\&\leq  2 \exp\left(-\frac{ \eps^2k(|S|+ k)^2}{3n|S|} \right)
\\&\leq  2 \exp\left(-\frac{ \eps^2k^2}{3n} \right).
\end{align*}

Coupled with Inequality~\eqref{eq:q>sumpi} the above implies that
$\delta \leq \sum_{i=1}^{q} \Pr\big( \Pure_{\eps}(S_i)=1\big)\leq 2 q
\exp\left(-\frac{ \eps^2k^2}{3n} \right)$, which means $q\geq
\frac{\delta}{2} \exp\left(\frac{ \eps^2k^2}{3n} \right)$, as desired.
\end{proof}


\begin{proofof}{Theorem \ref{thm:mainhard}}
  Given an instance of the $k$-purification problem we construct a
  $k$-cover instance with a $(1 \pm \eps')$-approximate oracle as follows.  We
  associate one set for each gold or brass item in the original
  instance in such a way that the value of the coverage function (for
  nonempty $S$) is $\cov(S)= k + \frac{n}{k}\Gold(S)$; i.e., there are
  $k$ elements common between all gold and brass \emph{sets}, and in
  addition, each gold set contains $\frac{n}{k}$ additional exclusive
  elements. The optimum solution consists of all gold sets, hence
\begin{align}\label{eq:err:opt>n}
\opt=k+\frac{n}{k}k = k+n > n.
\end{align}

We define 
$$  \cov_{\eps'}(S)= 
\begin{cases} 
k + |S| &\mbox{if } \Pure_{\eps}(S)=0 \\
\cov(S) &\mbox{otherwise.}
\end{cases}%
$$

We claim that
$\cov_{\eps'}$ is a $(1 \pm \eps')$-approximate oracle to $f$ for  $\eps' = 2\eps$.
We set $\eps' = 2\eps$. Notice that for $\Pure_{\eps}(S)=1$, the
estimate $\cov_{\eps'}(S)$ is clearly within the $1\pm\eps'$ factor of
$\cov(S)$.  Moreover when $\Pure_{\eps}(S)=0$,  we have
$\frac{k|S|}{n} - \eps \big(\frac{k|S|}{n} + \frac{k^2}{n}\big) \leq
\Gold(S) \leq \frac{k|S|}{n} + \eps \big(\frac{k|S|}{n} +
\frac{k^2}{n}\big)$. Thus we have
\begin{align*}
(1-\eps')\cov(S) &\leq \frac{1}{1+\eps}\cov(S) \\
&= \frac{1}{1+\eps}\left[k + \frac{n}{k}\Gold(S)\right]
\\&\leq  \frac{1}{1+\eps}\left[k + \frac{n}{k}\left( \frac{k|S|}{n} + \eps \left(\frac{k|S|}{n} + \frac{k^2}{n}\right) \right)\right] 
\\&=  \frac{1}{1+\eps}\left[k +  {|S|} + \eps ({|S|} + {k} )\right] 
\\&= k+|S| = \cov_{\eps'}(S).
\end{align*}
Similarly we have
\begin{align*}
(1+{\eps'})\cov(S)&\geq\frac{1}{1-\eps}\cov(S)\\
&= \frac{1}{1-\eps}\left[k + \frac{n}{k}\Gold(S)\right]
\\&\geq  \frac{1}{1-\eps}\left[k + \frac{n}{k}\left( \frac{k|S|}{n} - \eps \left(\frac{k|S|}{n} + \frac{k^2}{n}\right) \right)\right] 
\\&=  \frac{1}{1-\eps}\left[k +  {|S|} - \eps ({|S|} + {k} )\right] 
\\&= k+|S| = \cov_{\eps'}(S).
\end{align*}
Therefore, $\cov_{{\eps'}}$ is a $(1 \pm \eps')$-approximate oracle to $\cov$.

For an arbitrary subset $S$ of size $k$ with $\Pure_\eps(S)=0$, we
have
\begin{align*}
\frac{\cov(S)}{\opt}
&< \frac{k + \frac{n}{k}\Gold(S)}{n}
\\
&\leq  \frac{k + \frac{n}{k}\Big[ \frac{k|S|}{n} + \eps \Big(\frac{k|S|}{n} + \frac{k^2}{n}\Big) \Big]}{n}
\\
&\leq  \frac{k + \frac{n}{k}\Big[ \frac{k^2}{n} + \eps \Big(\frac{k^2}{n} + \frac{k^2}{n}\Big) \Big]}{n}
\\
&=  \frac{\left(2k + 2\eps k \right)}{n} 
\\
&\leq \frac{4k}{n}.
\end{align*}
Thus, if $S$ is a $\frac{4k}{n}$-approximate solution to the $k$-cover
instance, we have $\Pure_\eps(S)=1$.  Therefore, any
$\frac{6k}{n}$-approximation algorithm returns a set
$S$ such that $\Pure_\eps(S)=1$ with probability at least
$\frac{{6k}/{n}-{4k}/{n}}{\opt}=
\frac{2k/n}{n+k}=\frac{2k}{n^2+kn}\geq \frac{1}{n^2}$.

Recall that for any subset $S$ given that $\Pure_{\eps}(S)=0$, the
value of $\cov_{{\eps'}}(S)$ is predetermined, and can be computed
independent of the actual value of $\cov(S)$. Thus, using a
$\frac{6k}{n}$-approximation algorithm for the $k$-cover problem with
${\eps'}$-error oracle, with probability $\frac{1}{n^2}$, one can find
a set $S$ such that $\Pure_{\eps}(S)=1$, using the same number of queries.
Theorem~\ref{thm:purehard} states that the number of queries is not less than
\begin{align*}
\frac 1 {n^2}\exp\left(\Omega\Big(\frac{{\eps'}^2k^2}{n}\Big)\right) &=
\frac 1 {n^2}\exp\left(\Omega\Big(\frac{\eps^2k^2}{n}\Big)\right) \in
\exp\left(\Omega\Big(\frac{\eps^2k^2}{n}-\log
  n\Big)\right)=\exp\left(\Omega\Big({n\eps^2\alpha^2}-\log n\Big)\right).\qedhere
\end{align*}
\end{proofof}



\section{Omitted proofs for the sketching technique}\label{sec:sketch-app}
\begin{proofof}{Lemma~\ref{lm:str:one}}
  Let $X_u$ be a random variable indicating whether $h(u)\leq p$ for a
  vertex $u\in \Gamma(G,S)$.  By definition we have
  $\cov(S)=|\Gamma(G,S)|$ and $\sum_{u\in
    \Gamma(G,S)}X_u=|\Gamma(H_p,S)|$. Thus, we have
\begin{align*}
\E\Big[|\Gamma(H_p,S)|\Big]=
\E\left[\sum_{u\in \Gamma(G,S)}\hspace{-2mm} X_u\right]=
\hspace{-2mm}\sum_{u\in \Gamma(G,S)} \hspace{-2mm}\E[X_u]=
\hspace{-1mm}\sum_{u\in \Gamma(G,S)}\hspace{-2mm} p=
p |\Gamma(G,S)|.
\end{align*}
By the Chernoff bound to $\Gamma(H_p,S)$ we know that with probability at least
$1-2 \exp\left(-\frac{\epsilon'^2 p |\Gamma(G,S)|}{3} \right)$,
\begin{align*}
\Big||\Gamma(H_p,S)| -p|\Gamma(G,S)|\Big| \leq \eps' p |\Gamma(G,S)|.
\end{align*}
In other words,
\begin{align*}
\Pr\left( \Big|\frac 1 p |\Gamma(H_p,S)| -\cov(S)\Big| \leq \eps' \cov(S) \right) \geq 1-2 \exp\left(-\frac{\eps'^2 p \cov(S)}{3} \right).
\end{align*}
Setting $\eps'=\eps \frac{\opt_k}{\cov(S)}$ in the above yields
\begin{align*}
\Pr\left( \Big|\frac 1 p |\Gamma(H_p,S)| -\cov(S)\Big| \leq \eps \opt_k \right) &\geq 1-2 \exp\left(-\frac{\eps^2 \opt_k^2 p }{3 \cov(S)} \right) \\
&\geq 1-2 \exp\left(-\frac{\eps^2 \opt_k^2 }{3 \cov(S)} \frac{6\delta'}{\eps^2 \opt_k} \right) &\text{from definition of $p$,}\\
&= 1-2 \exp\left(-\frac{ \opt_k }{3 \cov(S)} {6\delta'} \right) \\
&\geq 1-2 \exp\left(-\frac{6\delta'}{3 }  \right) &\text{since $\opt_k\geq \cov(S)$,}\\
&> 1-\exp\left(1-\frac{6\delta'}{3 }\right)  &\text{as $2<e$,}\\
&\geq 1-e^{\delta'}. &\qedhere
\end{align*}
\end{proofof}

\begin{proofof}{Lemma~\ref{lm:str:all}}
  Set $\delta'=k\delta\log n$.  Lemma~\ref{lm:str:one} states that for
  an arbitrary $S\subseteq \sets$ of size at most $k$, we have with
  probability $1-e^{-k \delta\log n}$,
\begin{align*}
\Big|\frac 1 p |\Gamma(H_p,S)| -  \cov(S)\Big| \leq \eps \opt_k.
\end{align*}
Note that there are ${n \choose k}$ different sets $S$ of size $k$.
By the union bound, with probability $1-{n \choose k}e^{-k\delta\log
  n} \geq 1-{n ^ k} e^{-k\delta\log n}= 1- e^{-\delta}$, we have for
all such choices
\begin{align}\label{eq:str:allbound}
\Big|\frac 1 p |\Gamma(H_p,S)| - \cov(S)\Big| \leq \eps \opt_k.
\end{align}
Let $\opt_k$ be the optimum solution on $G$ and let $S$ be the
solution obtained from the $\alpha$-approximation algorithm \Alg when
run on $H_p$. Applying Inequality~\eqref{eq:str:allbound} to $\opt_k$ and $S$, we simultaneously have
\begin{align}\label{eq:str:optgoodH}
\left|\frac 1 p |\Gamma(H_p,\opt_k) | - \opt_k \right| \leq \eps \opt_k
\end{align}
and 
\begin{align}\label{eq:str:Hgood}
\left|\frac 1 p |\Gamma(H_p,S)| - \cov(S)\right| \leq \eps \opt_k.
\end{align}
In addition, since $S$ is an $\alpha$-approximate solution on $H_p$ we
have
\begin{align}\label{eq:str:Sgood}
|\Gamma(H_p,\opt_k)| \leq \frac{1}{\alpha}  |\Gamma(H_p,S)|.  
\end{align}
Inequalities \eqref{eq:str:Sgood} and \eqref{eq:str:optgoodH} together
ensure with probability $1-e^{-\delta}$ that
\begin{align*}
\alpha \opt_k - \frac 1 p |\Gamma(H_p,S)| \leq \alpha \eps \opt_k.
\end{align*}
Combining the above with Inequality \eqref{eq:str:Hgood}, we obtain
\begin{align*}
\alpha \opt_k - \cov(S) \leq \alpha \eps \opt_k + \eps \opt_k \leq 2\eps \opt_k.
\end{align*}
This means that $S$ is an $(\alpha-2\eps)$-approximation to $k$-cover
on $G$ as desired.

\end{proofof}

\begin{proofof}{Lemma~\ref{lm:str:CpN()<}}
By applying Lemma~\ref{lm:str:all} to $S=\arg\max_S |\Gamma(H_p,S)|$,
we have
\begin{align*}
\frac 1 p \max_{S\subseteq \sets:|S|=k}\hspace{-1mm}|\Gamma(H_p,S)| - \cov(S) \leq \eps \opt_k.
\end{align*}
Combining with 
$\cov(S)\leq \opt_k$ and noting that $H'_p$ is a subgraph of $H_p$
gives
\begin{align} 
 \max_{S\subseteq \sets:|S|=k}|\Gamma(H'_p,S)|&\leq p (1+\eps) \opt_k
\nonumber
\intertext{and we plug in the definition of $p$ to obtain}
 \max_{S\subseteq \sets:|S|=k}\Gamma(H'_p,S) 
&\leq  \frac{6C \log(n)k\delta}{\eps^2 \opt_k} (1+\eps) \opt_k \nonumber\\
 &= \frac{6C (1+\eps)\log(n)k\delta}{\eps^2} \nonumber\\
 &\leq \frac{12C\log(n)k\delta}{\eps^2}.  \nonumber\qedhere
\end{align}
\end{proofof}

\begin{proofof}{Lemma~\ref{lm:str:N()>}}
  Let $\opt_{H'} = \arg\max_S|\Gamma(H'_p,p)|$. There is a set $v\in
  \opt_{H'}$ such that
  $|\Gamma(H'_p,\opt_{H'})|-|\Gamma(H'_p,\opt_{H'}-{v})|\leq
  \frac{|\Gamma(H'_p,\opt_{H'})|}{k}$, i.e., the marginal effect of
  $v$ is at most a $\frac 1 k$ fraction of the total value of
  $|\Gamma(H'_p,\opt_{H'})|$. Notice that by optimality of
  $\opt_{H'}$, replacing $v$ with any other set does not increase the
  union size.  Thus, for any vertex $v'\in \sets$ the number of
  neighbors of $v'$ in $\elements\setminus \Gamma(H'_p,\opt_{H'})$ is
  at most $\frac{|\Gamma(H'_p,\opt_{H'})|}{k}$. Therefore, the number
  of edges between $\sets$ and $\elements\setminus
  \Gamma(H'_p,\opt_{H'})$ does not exceed
  $n\frac{|\Gamma(H'_p,\opt_{H'})|}{k}$.

  On the other hand, the degree of the elements in
  $\Gamma(H'_p,\opt_{H'})$ is at most $\frac{\log(1/\eps)n}{\eps k}$,
  hence the number of edges between $\sets$ and
  $\Gamma(H'_p,\opt_{H'})$ does not exceed $|\Gamma(H'_p,\opt_{H'})|
  \frac{n\log(1/\eps)}{\eps k}$. Therefore, one can bound the total
  number of edges in $H'_p$ as follows.
\begin{align*}
m'_p &\leq n\frac{|\Gamma(H'_p,\opt_{H'})|}{k} +|\Gamma(H'_p,\opt_{H'})| \frac{n\log(1/\eps)}{\eps k}\\
&\leq \Big|\Gamma(H'_p,\opt_{H'})\Big|\cdot\frac{n}{k} \left(1+ \frac{\log(1/\eps)}{\eps }\right)\\
&\leq \Big|\Gamma(H'_p,\opt_{H'}) \Big|\cdot\frac{n}{k} \cdot\frac{2\log(1/\eps)}{\eps }.
\end{align*}
We obtain by reordering
\begin{align*}
m'_p \frac{\eps k}{2n\log(1/\eps)} \leq |\Gamma(H'_p,\opt_{H'})|.  &\qedhere
\end{align*}
\end{proofof}

\begin{proofof}{Theorem~\ref{thm:str:main}}
  Pick $p\geq \frac{6 k\delta \log n}{\eps^2 \opt_k}$. 
  By Lemma~\ref{lm:str:H'good}, 
  any $\alpha$-approximate solution on $H'_p$ is an
  $(\alpha-\eps)$-approximate solution on $H_p$ with probability
  $1-e^{-\delta}$.
  Moreover, 
  we know from Lemma~\ref{lm:str:all} that 
  any $(\alpha-\eps)$-approximate solution on $H_p$
  is an $(\alpha-3\eps)$-approximate solution on $G$ with probability
  $1-e^{-\delta}$. 
  Therefore, any $\alpha$-approximate solution on
  $H'_p$ is an $(\alpha-3\eps)$-approximate solution on $G$ with
  probability $1-2e^{-\delta}$.

Let us set $p'=\frac{6 k\delta \log n }{\eps^2 \opt_k}$ and 
$p_0=\frac 1 {m},p_1=\frac 1 {m(1-\eps)},p_2=\frac 1 {m(1-\eps)^2}, \dots, p_{\mu}=1$, where $\mu=O(\log m)$.  
Indeed, there is some $i$ such that $p'\leq p_i\leq \frac 1
{1-\eps} p'$. Remark that we set $\delta =
{\log(\log_{1-\eps}m)}\delta''$.  We may assume without loss of
generality that Lemmas~\ref{lm:str:all}, \ref{lm:str:H'good}
and \ref{lm:str:CpN()<} all hold for every $p_j$ with $j\geq i$ since
union bound ensures this outcome happens with probability at least
\begin{align*}
1-3\log_{1-\eps}m \cdot e^{\delta} =
1-3\log_{1-\eps} m\exp[{{\log(\log_{1-\eps}m)}\delta''}] =
1-3e^{\delta''}.
\end{align*}


Let $p^*$ be (a random number) such that $p^*\geq \frac 1
{1-\eps}\frac{6 k\delta\log n}{\eps^2 \opt_k}$. Remark that, since
$p^*\geq\frac 1 {1-\eps}p'$, there is some (random number) $j$ such
that $p'\leq p_j \leq p^*\leq p_{j+1}= \frac{p_j}{1-\eps}$.  Thus,
\begin{align}\label{X}
\left|\opt_k-\frac{1}{p_j}\opt(H'_{p_j}) \right|\leq 3\eps \opt_k,
\end{align}
and similarly,
\begin{align*}
3\eps \opt_k &\geq 
\left|\opt_k-\frac{1}{p_{j+1}}\opt(H'_{p_{j+1}})\right|\\
&=\left|\opt_k-\frac{1-\eps}{p_j}\opt(H'_{p_{j+1}}) \right|\\
&\geq (1-\eps) \left|\opt_k-\frac{1}{p_j}\opt(H'_{p_{j+1}}) \right|-\eps \opt_k,
\end{align*}
which, assuming $\eps \leq \frac{1}{5}$, gives 
\begin{align}\label{Y}
5\eps \opt_k \geq 
 \left|\opt_k-\frac{1}{p_j}\opt(H'_{p_{j+1}})\right|.
\end{align}
Combining \eqref{X} and \eqref{Y} yields
\begin{align}\label{Z}
\left|\frac{1}{p_j}\opt(H'_{p_{j+1}})-\frac{1}{p_j}\opt(H'_{p_j})\right|\leq 8\eps \opt_k.
\end{align}
The inequalities $p_j \leq p^*\leq p_{j+1}$ implies $H'_{p_{j}} \subseteq H'_{p^*} \subseteq H'_{p_{j+1}}$, hence 
\begin{align}\label{A}
\Gamma(H'_{p_{j}},S) \leq \Gamma(H'_{p^*},S) \leq
\Gamma(H'_{p_{j+1}},S) \qquad \mbox{for any set $S$},
\end{align}
and in turn,
\begin{align}\label{B}
\opt(H'_{p_{j}}) \leq \opt(H'_{p^*}) \leq \opt(H'_{p_{j+1}}).
\end{align}
Combining \eqref{Z} and \eqref{B} gives
\begin{align}\label{C}
\frac{1}{p_j}\opt(H'_{p_{j+1}})-\frac{1}{p_j}\opt(H'_{p^*}) \leq 8\eps \opt_k.
\end{align}
%

Now suppose $S$ is an $\alpha$-approximate solution on $H'_{p^*}$. We have
\begin{align*}
 \cov(S) + \eps \opt_k
 &\geq \frac{1}{p_{j+1}} |\Gamma(H_{p_{j+1}},S)| &\text{from
 	Lemma~\ref{lm:str:all},}
 \\&\geq \frac{1}{p_{j+1}} |\Gamma(H'_{p_{j+1}},S)| &\text{since $H'_{p_{j+1}} \subseteq H_{p_{j+1}}$,}
 \\&\geq \frac{1}{p_{j+1}} |\Gamma(H'_{p^*},S)| &\text{by \eqref{A},}
\\&\geq \alpha \frac{1}{p_{j+1}}  \opt(H'_{p^*}) &\text{from definition of $S$,}
\\&= {\alpha}{(1-\eps)} \frac{1}{p_{j}}  \opt(H'_{p^*}) 
\\&\geq {\alpha}{(1-\eps)}[ \frac{1}{p_{j}} \opt(H'_{p_{j+1}}) -
8\eps \opt_k] &\text{from \eqref{C},}
\\&\geq  \alpha\frac{1}{p_{j+1}} \opt(H'_{p_{j+1}}) - \alpha{8\eps}{(1-\eps)} \opt_k
\\&\geq  \alpha\frac{1}{p_{j+1}} \opt(H'_{p_{j+1}}) - \alpha{8\eps} \opt_k 
\\&\geq  \alpha\frac{1}{p_{j+1}} \opt(H'_{p_{j+1}}) - {8\eps} \opt_k
&\text{since $\alpha \leq 1$,}
\\&\geq {\alpha} \opt_k - 11\eps \opt_k 
\\&= (\alpha-11\eps) \opt_k,
\end{align*}
that is, any $\alpha$-approximate solution on $H'_{p^*}$ is an
$(\alpha - 12\eps)$-approximate solution on $G$.

Finally we argue that $p^*\geq \frac 1 {1-\eps}\frac{6
  k\delta\log n}{\eps^2 \opt_k}$ for $H'_{p^*}= H_{\leq n}$.  We
set $C=\frac 1 {1-\eps}$ in Lemma~\ref{lm:str:CpN()<} to obtain
$\max_{S\subseteq \sets:|S|=k}|\Gamma(H'_{p''},S)|\leq \frac{\frac{12k\delta}
  {1-\eps}\log n}{\eps^2}$, where $p''=\frac 1 {1-\eps}\frac{6
  k\delta\log n}{\eps^2 \opt_k}$. 
Thus, $H'_{p^*}$ contains $H'_{p''}$ if $\max_{S\subseteq
  \sets:|S|=k}|\Gamma(H'_{p^*},S)|\geq \frac{\frac {12k\delta}
  {1-\eps}\log n}{\eps^2}$.
On the other hand if we set $m'_{p^*}\geq \frac{24 n\delta\log(1/\eps)
  \log n}{(1-\eps)\eps^3}$ in Lemma~\ref{lm:str:N()>}, we get
\begin{align*}
|\Gamma(H'_{p^*},\opt_{H'})| &\geq
m'_{p^*} \frac{\eps k}{2n\log(1/\eps)} 
\\&\geq \frac{24 n\delta\log(1/\eps) \log n}{(1-\eps)\eps^3}\cdot 
\frac{\eps k}{2n\log(1/\eps)}
\\&= \frac{\frac {12 k\delta} {1-\eps}\log n}{\eps^2}.  \qedhere
\end{align*}
\end{proofof}

\begin{proofof}{Lemma~\ref{lm:loglambda}}
  Let $\opt_{k'}$ be the set cover of size $k'$ on the input graph.
  Pick $p^*$ such that $H'_{p^*}=H_{\leq n}$. Remark that $H_{p^*}$ is
  an induced subgraph of $G$ containing all sets \sets. Thus,
  $\opt_{k'}$ is a set cover in $H_{p^*}$, as well.

  Similarly to the proof of Lemma~\ref{lm:str:H'good}, we present here
  a randomized solution $R^*$ that, in expectation, covers $1-\xi
  \eps$ fraction of the vertices. This implies that there exists a
  solution of size $k'$ covering at least $1-\xi \eps$ fraction of
  the elements.

  We construct $R^*$ by removing $\xi \eps k'=\eps k$ sets chosen
  uniformly at random from $\opt_{k'}$ and adding $\xi \eps k'=\eps k$
  other sets picked uniformly at random.

  Each element with degree at most $\frac{n\log(1/\eps)}{\eps k}$ in
  $H_{p^*}$ still exists in $\Gamma(H_{p^*},R^*)$ with probability
  $1-\xi \eps$, hence it is in $\Gamma(H_{\leq n},R^*)$, too. On the
  other hand, for any element $u$ with degree at least
  $\frac{n\log(1/\eps)}{\eps k}$ in $H_{p^*}$, the probability that
  $u$ is not contained by any of $\xi \eps k'$ randomly chosen sets
  cannot exceed
\begin{align*}
\left(1-\frac{\frac{n\log(1/\eps)}{\eps k}}{n}\right)^{\eps k}
=\left(1-\frac{\log(1/\eps)}{\eps k}\right)^{\eps k}
\leq \left(\frac 1 e\right)^{\log{\frac{1}{\eps}}}
= \eps.
\end{align*}
Thus, each element of $H_{p^*}$ exists in $\Gamma(H_{\leq n},R^*)$, as
well, with probability $1-\eps$.
\end{proofof}

\section{Omitted proofs for the streaming setting}\label{sec:alg-app}

\begin{proofof}{Theorem~\ref{thm:alg:kcover}}
  The first part of the statement is derived from the approximation
  guarantee of \greedy and Theorem~\ref{thm:str:main}.  These two
  imply that $\greedy(k,H_{\leq n}(k,\eps',\delta''))$ is a $1-\frac 1
  e -12 \eps' = 1-\frac 1 e -\eps$ approximate solution with
  probability $1-3e^{\delta''}= 1-3e^{2+\log n}\geq 1-\frac 1 n$, as
  desired.

  By definition of $H_{\leq n}(k,\eps',\delta'')$, the number of edges
  in this sketch is not more than
\begin{align*}
&\frac{24 n\log(1/\eps') \log(n){\delta''}{\log\log_{1-\eps'} m}}{(1-\eps')\eps'^3}+n
\\=&
\frac{24 n\log(\frac {12}{\eps} ) \log n{(2+\log n)}{\log\log_{1- {\eps}/ {12}}m)}}{(1-{\eps}/{12})({\eps}/ {12})^3}+n
\\\in\:&
O\Big(\frac {n \log^3n\log^2m\log\log m}{\eps'^3}\Big) \in \tilde{O}(n) 
&\text{w.l.o.g.~assuming $\eps' \in \Omega(\frac 1 m).$} &\qedhere
\end{align*}
\end{proofof}

\begin{lemma}\label{lm:H<n:1-l-eps}
	Let $k'$ be the size of a set cover of graph $G$. Pick arbitrary
	$\eps\in (0,1]$, $\lambda'\in (0,\frac 1 e]$, and let
	$k=\log(\frac{1}{\lambda'})k'$. Then $\greedy(k,H_{\leq n})$ covers at
	least $1-\lambda'- \eps\log\frac{1}{\lambda'}$ fraction of elements
	\elements in $H_{\leq n}$.
\end{lemma}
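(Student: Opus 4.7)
The plan is to combine Lemma~\ref{lm:loglambda}, which produces a near-complete cover of $H_{\leq n}$ using only $k'$ sets, with the standard boosted guarantee for greedy mentioned immediately before Algorithm~\ref{Alg:main:epsSetCover0}: after $k'\log(1/\lambda')$ steps, greedy on any coverage instance covers at least a $(1-\lambda')$ fraction of $\opt_{k'}$. Applying that boosted greedy guarantee to the sketch $H_{\leq n}$ itself will convert the existential statement from Lemma~\ref{lm:loglambda} into the claimed quantitative bound.

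First I would instantiate Lemma~\ref{lm:loglambda} with $\xi=\log(1/\lambda')$, so that $k=\xi k'$ agrees with the value of $k$ used to build $H_{\leq n}(k,\eps,1)$ in the present statement. This yields a (witness) solution $S^\star\subseteq\sets$ with $|S^\star|=k'$ whose coverage in $H_{\leq n}$ is at least a $1-\xi\eps=1-\eps\log(1/\lambda')$ fraction of all elements of $H_{\leq n}$. Consequently, if $m_H$ denotes the number of elements present in $H_{\leq n}$,
\[
\opt_{k'}(H_{\leq n}) \;\geq\; |\Gamma(H_{\leq n},S^\star)| \;\geq\; (1-\eps\log(1/\lambda'))\,m_H.
\]

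Next I would apply the boosted greedy bound, treating $H_{\leq n}$ as the input coverage instance. Since $k=k'\log(1/\lambda')$, we get
\[
|\Gamma(H_{\leq n},\greedy(k,H_{\leq n}))| \;\geq\; (1-\lambda')\,\opt_{k'}(H_{\leq n}) \;\geq\; (1-\lambda')\bigl(1-\eps\log(1/\lambda')\bigr)\,m_H.
\]
Expanding $(1-\lambda')(1-\eps\log(1/\lambda'))=1-\lambda'-\eps\log(1/\lambda')+\lambda'\eps\log(1/\lambda')$ and dropping the nonnegative cross term, the right-hand side is at least $(1-\lambda'-\eps\log(1/\lambda'))\,m_H$, which is exactly the claimed fraction.

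There is no real obstacle here beyond careful parameter matching: the only nontrivial point is to align the $k$ appearing inside the sketch $H_{\leq n}(k,\eps,1)$ with the number of greedy rounds being executed. Choosing $\xi=\log(1/\lambda')$ makes those two quantities coincide, so the existential statement of Lemma~\ref{lm:loglambda} applies verbatim to the instance on which greedy is run, and the rest is a one-line combination of two approximation guarantees.
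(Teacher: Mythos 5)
Your proof is correct and follows essentially the same route as the paper's own argument: invoke Lemma~\ref{lm:loglambda} with $\xi=\log(1/\lambda')$ to produce a size-$k'$ witness covering a $1-\eps\log(1/\lambda')$ fraction of $H_{\leq n}$, then apply the boosted greedy guarantee for $k=k'\log(1/\lambda')$ rounds, and multiply the two factors. The paper states the final inequality $(1-\lambda')\bigl[1-\eps\log\frac{1}{\lambda'}\bigr]\geq 1-\lambda'-\eps\log\frac{1}{\lambda'}$ without expansion; you spell out the dropped cross term, which is the only presentational difference.
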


\begin{proof} 
  Lemma~\ref{lm:loglambda} ensures the existence of a solution of
  size $k'$ on $H_{\leq n}$ covering at least
  $1-\eps\log\frac{1}{\lambda'}$ fraction of \elements in
  $H_{\leq n}$.  Thus, $\greedy(k,H_{\leq n})$ covers at least
\begin{align*}
  (1-\lambda')\left[1-\eps\log\frac{1}{\lambda'}\right]\geq
  1-\lambda'-\eps\log\frac{1}{\lambda'}
\end{align*}
fraction of \elements in $H_{\leq n}$. 
\end{proof}

\begin{proofof}{Lemma~\ref{lm:epsSC0}}
  According to Lemma~\ref{lm:H<n:1-l-eps}, for a graph $G$ with a set
  cover of size $k'$, the solution $\greedy(k,H_{\leq n})$ covers at
  least $1-\lambda'-\eps\log\frac{1}{\lambda'}$ fraction of \elements
  in $H_{\leq n}$, hence Algorithm~\ref{Alg:main:epsSetCover0} does
  not return false, as claimed.

  We prove the second part of the lemma in the case Theorem
  \ref{thm:str:main} holds for $H_{\leq n}$.  This happens with
  probability $1-3e^{\delta''}= 1-3e^{\log(C'n)+2}\geq 1-\frac 1
    {C'n}$.

    On the other hand, in case Algorithm~\ref{Alg:main:epsSetCover0}
    does not return false, $\greedy(k,H_{\leq n})$ covers no less than
    $1-\lambda'-\eps\log\frac{1}{\lambda'}$ fraction of \elements in
    $H_{\leq n}$. Then Theorem~\ref{thm:str:main} implies that
    $\greedy(k,H_{\leq n})$ covers at least
\begin{align*}
  1-\lambda'- \eps\log\frac{1}{\lambda'} - 12 \eps \geq 1-\lambda'-13
  \eps\log\frac{1}{\lambda'} = 1-\lambda'-\eps'
\end{align*}
fraction of elements.

By definition of $H_{\leq n}(k,\eps,\delta'')$, the number of edges in
this sketch is at most
\begin{align*}
  &\frac{24 n\log(1/\eps)
    \log(n){\delta''}{\log\log_{1-\epsilon}m}}{(1-\eps)\eps^3}+n
  \\
  =\:&
  \frac{24 n\log(1/\eps) \log
    n\log_{1+\eps}n[\log(C'n)+2]}{\log\log_{1-\eps}m}{(1-\eps)\eps^3}+n
      \\
      \in\:&
      O\Big(\frac {n \log^2n\log^2m\log C'\log\log m}{\eps^3}\Big)
      &\text{w.l.o.g.~assuming $\eps \in \Omega(\frac 1 m)$,}
      \\
= \:&
      O\left(\frac {n \log^2n\log^2m\log C'\log\log m}{\left[\frac
        {\eps'} {13 \log(1/ {\lambda'})}\right]^3}\right) 
\\
=\: & O\left(\frac {n\log^2n\log^5m\log C'\log\log m}{ {\eps'} ^3}\right)
      &\text{w.l.o.g.~assuming $\lambda' \in \Omega(\frac 1 m)$,}
      \\
\subseteq\:& O\left(\frac {n \log^2n\log^6m\log C'}{ {\eps'}^3}\right).  &\qedhere
\end{align*}
\end{proofof}

\begin{proofof}{Theorem~\ref{thm:alg:epsSetCover}}
  Remark that each iteration of the loop in
  Algorithm~\ref{Alg:main:epsSetCover} increases $k'$ by a factor of
  $1+\frac{\eps}{3}$, and we always keep $k'\leq n$. Thus, we run at
  most $\log_{1+\frac{\eps}{3}}n$ instances of
  Algorithm~\ref{Alg:main:epsSetCover0}. Lemma~\ref{lm:epsSC0} holds
  for each of these instances with probability $\frac{1}{C
    n\log_{1+\frac{\eps}{3}}n}$. They all hold with probability
  $1-\frac{1}{Cn}$ by the union bound. We prove the statement of the
  theorem assuming this.

  Let $k^*$ be the size of the minimum set cover in $G$, and let $k'$
  be the final value of this variable after run of
  Algorithm~\ref{Alg:main:epsSetCover}. Indeed
  Algorithm~\ref{Alg:main:epsSetCover0} returns false for
  $\frac{k'}{1+\eps/3}$, hence
  $k'\leq (1+\eps/3)k^*$. Note that, the size of the set returned by
  Algorithm~\ref{Alg:main:epsSetCover} is
\begin{align*}
k'\log\frac{1}{\lambda' }&=k'\log\frac{1}{\lambda e^{-\varepsilon/2}}
\\&= k'\left[\log\frac{1}{\lambda}+\frac{\eps}{2}\right]
\\&\leq \left[\log\frac{1}{\lambda}+\frac{\eps}{2}\right] \cdot
\left[1+\frac{\eps}{3}\right]\cdot k^*
\\&= k^* \left[\log\frac{1}{\lambda}+\frac{\eps}{2}+ \frac{\eps}{3} \log\frac{1}{\lambda}+\frac{\eps^2}{6}\right]
\\&\leq (1+{\eps}) k^*\log\frac{1}{\lambda}.
\end{align*}
On the other hand, this solution covers at least
\begin{align*}
1-\lambda'-\eps' = 1- \lambda e^{-\eps/2} - {\lambda}(1-e^{-\eps/2}) = 1-\lambda
\end{align*}
fraction of the vertices in $G$, as claimed.

Lemma~\ref{lm:epsSC0} bounds the number of edges in the sketch used by
each instance of Algorithm~\ref{Alg:main:epsSetCover0} as
\begin{align*}
O\left(\frac {n \log^2n\log^6m\log C'}{ {\eps'} ^3}\right)
=
O\left(\frac {n \log^2n\log^6m\log C \log_{1+\frac{\eps}{3}}n}{ [{{\lambda}(1-e^{-\eps/2})}] ^3}\right)
\subseteq
\tilde{O}(n/\lambda^3) \subseteq\tilde{O}_{\lambda}(n). 
\end{align*}
With $\log_{1+\eps/3}n$ runs of Algorithm~\ref{Alg:main:epsSetCover0},
the total number of edges in all the sketches used in this algorithm
is $\tilde{O}(n/\lambda^3) \subseteq\tilde{O}_{\lambda}(n)$.
\end{proofof}



\section{An $O(nk)$ sketch using $\ell_0$ sketches}\label{Apx:O(nk)}

We use the $\ell_0$ sketch defined as follows.
\begin{definition}[$\ell_0$ sketch]
Given a multiset $\Pi$, the number of distinct elements in $\Pi$ is said to be $\ell_0$ of $\Pi$.
\end{definition}
Cormode et al.~\cite{cormode2003comparing} provides an $O(\log n\frac{1}{\eps^2}\log\frac 1 {\delta})$ space streaming algorithm that with probability $1-\delta$ gives a $1-\eps$ approximation to the $\ell_0$ sketch. Interestingly, one can merge two of these $\ell_0$ sketches and again with probability $1-\delta$ get a $1-\eps$ approximation to the $\ell_0$ sketch of the merged multiset.
 
Given an input graph $G(\sets,\elements)$, we maintain an $\ell_0$ sketch (using the algorithm in \cite{cormode2003comparing}) for the set of neighbors of each vertex in $\sets$ (we fix $\eps$ and $\delta$ used in the $\ell_0$ sketch later). We estimate the value of each set $S\subseteq \sets$ by merging the $\ell_0$ sketches corresponding to the vertices in $S$.

Consider that, for each set $S\subseteq \sets$, with probability $1-\delta$, we give a $1-\eps$ approximation of the coverage valuation of $S$. Nevertheless, to find the $k$-cover solution, we look at ${n \choose k}$ sets. The union bound ensures that, with probability $1-{n \choose k} \delta$, all the $n \choose k$ estimations are accurate. We set $\delta=\frac{1}{\tilde{\Theta}({n \choose k})}$, so these hold with high probability. 

The space required by each $\ell_0$ sketch is $O(\log n\frac{1}{\eps^2}\log\frac 1 {\delta}) = \tilde{O}(\log{n \choose k})= \tilde{O}(k)$. The space required by the algorithm to maintain $n$ sketches is thus $\tilde{O}(nk)$, giving the following theorem.

\begin{theorem} 
  Using $\ell_0$ sketches, there exists an exponential-time $1-\eps$ approximation streaming algorithm for $k$ cover using $\tilde{O}(nk)$ space.
  \end{theorem}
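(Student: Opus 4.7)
The plan is to implement the scheme sketched in the paragraphs preceding the statement, verifying that all bounds go through. First, I would instantiate the streaming $\ell_0$ sketch of Cormode et al.~\cite{cormode2003comparing} with accuracy parameter $\eps$ and failure probability $\delta$ (to be chosen), and maintain one such sketch $\sigma_v$ for every vertex $v \in \sets$; when an edge $(v,u)$ arrives in the stream, we simply insert $u$ into $\sigma_v$. Each $\sigma_v$ uses $O(\eps^{-2} \log n \log(1/\delta))$ bits, so maintaining all $n$ of them costs $O(n \eps^{-2} \log n \log(1/\delta))$ space, and updates are $\tilde{O}(1)$ per edge.

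Next I would use the mergeability property of $\ell_0$ sketches: for any subfamily $S \subseteq \sets$, merging the sketches $\{\sigma_v : v \in S\}$ yields a sketch whose output is a $(1\pm\eps)$ estimate of $|\Gamma(G,S)| = \cov(S)$ with probability $\geq 1-\delta$. After the stream is processed, the (exponential-time) algorithm enumerates every $S \subseteq \sets$ with $|S|=k$, computes the merged estimate $\tilde{\cov}(S)$, and outputs the $S$ maximizing $\tilde{\cov}(S)$.

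The key step in the correctness analysis is a union bound over the $\binom{n}{k}$ candidate subfamilies: setting $\delta = 1/(n \binom{n}{k})$ ensures that, with probability $1-1/n$, every such estimate satisfies $(1-\eps)\cov(S) \leq \tilde{\cov}(S) \leq (1+\eps)\cov(S)$ simultaneously. On this event, if $S^*$ is the optimum and $\hat{S}$ the output, then $\cov(\hat{S}) \geq (1-\eps)\tilde{\cov}(\hat{S}) \geq (1-\eps)\tilde{\cov}(S^*) \geq (1-\eps)^2 \cov(S^*) \geq (1-2\eps)\opt_k$, which after rescaling $\eps$ gives the claimed $(1-\eps)$ approximation.

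Finally I would bound the space: with $\delta = 1/(n\binom{n}{k})$ we have $\log(1/\delta) = O(k \log n)$, so each sketch occupies $O(\eps^{-2} k \log^2 n) = \tilde{O}(k)$ bits, and the total space over all $n$ sketches is $\tilde{O}(nk)$, matching the statement. I don't foresee any real obstacle here; the only subtlety is making sure the failure-probability budget in the $\ell_0$ sketch is driven down to absorb the $\binom{n}{k}$-sized union bound without blowing up the $\tilde{O}(nk)$ space target, which works out because $\log \binom{n}{k} = O(k \log n)$ contributes only a polylogarithmic factor beyond the $k$ that is already present.
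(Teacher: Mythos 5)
Your proposal is correct and follows essentially the same approach as the paper: maintain a per-set $\ell_0$ sketch, merge to estimate coverage, union-bound over all $\binom{n}{k}$ candidate solutions with $\delta$ roughly $1/\binom{n}{k}$ so that $\log(1/\delta) = O(k\log n)$ yields $\tilde{O}(k)$ per sketch and $\tilde{O}(nk)$ total.
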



\section{Hardness of streaming problems} \label{sec:strHard}

\xxx{I should probably go through this once more.}

Here, we give a lower bound on the space required to solve $k$-cover in the streaming setting. To establish this lower bound, we present a reduction from the set-disjointness problem. In the set disjointness problem, two parties, namely Alice and Bob, each holds a subset of ${1,2,\dots,n}$. The goal is to determine whether the sets are disjoint or not. Razborov~\cite{razborov1992distributional} and Kalyanasundaram and Schintger~\cite{kalyanasundaram1992probabilistic} provide a lower bound of $\Omega(n)$ even when allowing randomization.

\begin{proofof}{Theorem \ref{thm:str:hard}}
Let $A$ be the set that Alice holds, and let $B$ be the set that Bob holds. In our hard $k$ cover instance we have two vertices (namely $a$ and $b$) in $\elements$ and $n$ vertices in $\sets$. Vertex $a$ has an edge to the $i$-th vertex in $\elements$ if and only if $i$ exists in $A$. Similarly, $b$ has an edge to the $i$-th vertex in $\elements$ if and only if $i$ exists in $B$. In the input stream first we see the edges of $a$ (which contains the information Alice holds) and then the edges of $b$ (which contains the information Bob holds).

In this example, if the sets $A$ and $B$ are disjoint, each of the vertices in $\sets$ covers at most one of $a$ and $b$, and thus, the value of an optimum solution to $1$-cover on this graph is $1$. Otherwise, there is a vertex $i$ which has edge to both $a$ and $b$, and thus, the value of an optimum solution to $1$-cover on this graph is $2$. 
Therefore, distinguishing between the case that the value of the optimum solution to $1$-cover is $1$ and the case that this value is $2$ requires $\Omega(n)$ space in total.
\end{proofof}

\end{document}